\title{Subcubic algorithm for (Unweighted) Unrooted Tree Edit Distance}
\author{Krzysztof Pióro}{Jagiellonian University, Kraków}{krzysztof.pioro1998@gmail.com}{}{}
\authorrunning{K. Pióro} 
\keywords{tree edit distance, dynamic programming, matrix multiplication} 
\pgfplotsset{compat=1.7}
\newcommand\set[1]{\left\{#1\right\}}
\DeclareMathOperator\Oh{\mathcal{O}}
\DeclareMathOperator\Ot{\widetilde{\mathcal{O}}}
\DeclareMathOperator{\simi}{sim}
\DeclareMathOperator{\ed}{ed}
\DeclareMathOperator{\rangemax}{rangemax}
\DeclareMathOperator{\mincol}{mincol}
\DeclareMathOperator{\maxrow}{maxrow}
\DeclareMathOperator{\rot}{root}
\DeclareMathOperator{\mul}{MUL}
\DeclareMathOperator{\sub}{sub}
\patchcmd{\ALG@doentity}{\item[]\nointerlineskip}{}{}{}
\begin{document}

\maketitle

\begin{abstract}
    The tree edit distance problem is a natural generalization of the classic string edit distance problem. Given two ordered, edge-labeled trees $T_1$ and $T_2$, the edit distance between $T_1$ and $T_2$ is defined as the minimum total cost of operations that transform $T_1$ into $T_2$. In one operation, we can contract an edge, split a vertex into two or change the label of an edge.

    For the weighted version of the problem, where the cost of each operation depends on the type of the operation and the label on the edge involved, $\Oh(n^3)$ time algorithms are known for both rooted and unrooted trees. The existence of a truly subcubic $\Oh(n^{3-\epsilon})$ time algorithm is unlikely, as it would imply a truly subcubic algorithm for the APSP problem. However, recently Mao (FOCS'21) showed that if we assume that each operation has a unit cost, then the tree edit distance between two rooted trees can be computed in truly subcubic time. 

    In this paper, we show how to adapt Mao's algorithm to make it work for unrooted trees and we show an $\Ot(n^{(7\omega + 15)/(2\omega + 6)}) \leq \Oh(n^{2.9417})$ time algorithm for the unweighted tree edit distance between two unrooted trees, where $\omega \leq 2.373$ is the matrix multiplication exponent. It is the first known subcubic algorithm for unrooted trees.

    The main idea behind our algorithm is the fact that to compute the tree edit distance between two unrooted trees, it is enough to compute the tree edit distance between an arbitrary rooting of the first tree and every rooting of the second tree. 
\end{abstract}

\section{Introduction}
\label{sec:introduction}
The tree edit distance problem is a natural generalization of the classic string edit distance problem. Given two trees $T_1$ and $T_2$, the edit distance between $T_1$ and $T_2$ is defined as the minimum total cost of operations that transform $T_1$ into $T_2$. The exact definition of the problem depends on whether we consider rooted or unrooted trees.

For the unrooted variant, which is the focus of this paper, the trees are edge-labeled and for each vertex its neighbors form a cyclic order. We can think of unrooted trees as if they were embedded in the plane. In one operation, we can contract an edge, split a vertex into two or change the label of any edge. The cost of each operation depends on the type of the operation and the label on the edge involved.

Computing the edit distance between two trees has found applications in many different areas, such as computational biology \cite{RNA, BIO2}, image processing \cite{IMG1, IMG2, IMG3, IMG4} and comparing XML data \cite{XML1, XML2, XML3}. One of the most notable examples is comparing the secondary structures of RNA molecules, which can be represented as rooted trees \cite{RNA}.

The tree edit distance problem has been already studied for many years. In 1977, Tai \cite{n6} showed the first algorithm that computes the edit distance between two rooted trees of size $n$ in $\Oh(n^6)$ time. Next, Zhang and Shasha \cite{n4} used a dynamic programming approach to improve the complexity to $\Oh(n^4)$ time. Their dynamic programming scheme was used as a basis for later algorithms. 
Klein \cite{Klein} showed how to optimize their algorithm to $\Oh(n^3 \log n)$ time by better choosing the direction of transitions in their dynamic programming scheme. 
Furthermore, he showed that his algorithm can be extended to also work for the unrooted trees in the same time complexity. 
Later, Demaine, Mozes, Rossman and Weimann \cite{n3} further improved the time complexity for the rooted case to $\Oh(n^3)$. Finally, Dudek and Gawrychowski \cite{Dudek} showed that the tree edit distance between unrooted trees can also be solved in $\Oh(n^3)$ time.

From the lower bound side, Bringmann, Gawrychowski, Mozes and Weimann \cite{bound} showed that the existence of a truly subcubic $\Oh(n^{3-\epsilon})$ time algorithm for the tree edit distance is unlikely. They proved that the existence of such an algorithm implies a truly subcubic algorithm for the All Pairs Shortest Paths problem (assuming an alphabet of size $\Theta(n)$) and $\Oh(n^{k(1-\epsilon)})$ time algorithm for finding a maximum weight $k$-clique (assuming a sufficiently large constant size alphabet).

However, all of these previous algorithms and lower bounds work when the costs of the operations are arbitrary. Recently, Mao \cite{Mao} showed that if we assume that each operation has unit cost, then the tree edit distance between two rooted trees can be computed in $\Oh(n^{2.9546})$ time. 
Since in the weighted setting it was possible to obtain the same time complexity for unrooted trees as for rooted trees, Mao posed the following open problem:
\begin{center}\emph{Is it possible to compute the unweighted tree edit distance between two unrooted trees in subcubic time?}\end{center}
We answer this question affirmatively.

\subsection{Our contribution}
We build on ideas of Mao \cite{Mao} and Klein \cite{Klein} and obtain the first ever known subcubic algorithm for the tree edit distance between unrooted trees in the unweighted setting. 
Our main result is the following: 

\begin{theorem}\label{th:main}
    There is an $\Ot(mn^{(5\omega + 9)/(2\omega + 6)}) \leq \Oh(mn^{1.9417})$ time algorithm that computes the (unweighted) tree edit distance between two unrooted trees of sizes $n$ and $m$.
\end{theorem}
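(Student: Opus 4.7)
The plan is to build on the insight highlighted in the abstract: the unrooted tree edit distance between $T_1$ and $T_2$ can be obtained by fixing an arbitrary root $r$ of $T_1$ and then taking the minimum, over all vertices $v \in V(T_2)$, of the rooted tree edit distance between $(T_1,r)$ and $(T_2,v)$. Correctness should follow because in any optimal unrooted edit script, $r$ is mapped to (or replaced by) some $v \in V(T_2)$, and after declaring that $v$ to be the root, the script becomes a valid rooted edit script of the same cost. So the algorithm would first formalize this reduction and then compute all $n$ rooted tree edit distances simultaneously and efficiently.

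Naively applying a rooted algorithm in the style of Mao once per choice of $v$ would lose an extra factor of roughly $n$ compared to the target bound, so work must be shared across rootings. The key observation is that as $v$ ranges over $V(T_2)$, the rooted subforests of $T_2$ that appear in any DP table are precisely the edge-induced subforests of the unrooted tree $T_2$, i.e.\ those obtained by cutting a set of edges; there are only $\Oh(n^2)$ such subforests, independent of the root, so a single unified DP table indexed by (subforest of $T_1$, edge-induced subforest of $T_2$) suffices to answer every rooting. I would then adapt Mao's heavy-path decomposition and matrix-product batching to fill this unified table: the decomposition is applied to $T_1$ (whose root is fixed), while $T_2$ is processed using an unrooted path decomposition in the spirit of Klein. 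Each batch along a path should aggregate, in a single $(\min,+)$-matrix product, contributions that the naive analysis would pay separately for many rootings, and since the weight of the unrooted path decomposition of $T_2$ does not depend on the choice of root, the accounting behind Mao's exponent carries over unchanged.

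The main obstacle I anticipate is verifying that Mao's matrix-multiplication step generalizes cleanly to the unrooted setting. Mao's DP transitions along a heavy path enjoy a monotone, nested structure that enables reducing the $(\min,+)$-product to standard matrix multiplication via bounded-difference techniques; when subforests of $T_2$ are indexed by arbitrary edge cuts rather than a fixed rooting, this structure can break at the junction where the unrooted path changes direction. Re-establishing it, most likely by splitting each unrooted path at a centroid edge and treating the two halves separately in the manner of Klein, is the central technical step, and ensuring that the polylogarithmic overhead from this splitting does not inflate the final exponent is where I expect the bulk of the work to lie.
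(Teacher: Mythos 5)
Your high-level plan matches the paper's: fix an arbitrary rooting of $T_1$, reduce to computing the rooted distance against all rootings of $T_2$ (the paper's Lemma~\ref{reduction}), share work across rootings by indexing the DP by root-independent ``pieces'' of $T_2$, and run Mao's decomposition on $T_1$. You also correctly flag the matrix-multiplication step as the crux. But two things in the proposal are wrong or missing, and the crux is left unsolved.

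First, the characterization of the shared index set is off. The $\Oh(n^2)$ relevant pieces of $T_2$ are not ``subforests obtained by cutting a set of edges'' (there are exponentially many of those) but \emph{cyclic segments of the Euler tour} of $T_2$. The paper realizes them as ordinary segments of a \emph{doubled} Euler tour sequence, which is what makes each segment a bona fide rooted tree and lets the similarity matrix $S(T',Q)$ be a single $4|E(Q)|\times 4|E(Q)|$ array with ``invalid'' entries for out-of-range windows. This doubling is not cosmetic: it is exactly what creates the technical problem you anticipate, because the resulting matrices are no longer finite-upper-triangular and one must argue that the $-\infty$/invalid region does not break the bounded-difference structure that Mao's and Chi et al.'s fast $(\max,+)$ multiplication require. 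The paper resolves this by pre-filling each invalid cell with the maximum of the dominated valid cells, proving by induction on $i+j$ that the filled matrix is still $2$-bounded-difference, and showing the product restricted to valid cells is unchanged. Your alternative, ``split each unrooted path at a centroid edge and treat the halves separately,'' is a different and undeveloped idea; as stated it is not clear it even addresses the same obstacle, since the difficulty is in the shape of the similarity matrices, not in decomposing paths.

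Second, the claim that ``the accounting behind Mao's exponent carries over unchanged'' is false, and the theorem's exponent reflects this. The paper could not port one of Mao's Type II optimizations, so a Type II transition costs $\Ot(\mul(\Delta,n)+n\Delta^4)$ rather than $\Ot(\mul(\Delta,n)+n\Delta^3)$; balancing against Type I then gives $\Ot(mn^{(7\omega+15)/(2\omega+6)})$, which is strictly worse than the rooted bound (Dürr's $\Ot(mn^{1.915})$ with the same Chi et al.\ subroutine). The numerical value $1.9417$ looks better than Mao's original only because it uses the newer $\Ot(n^{(3+\omega)/2})$ bounded-difference multiplication, not because the exponent is preserved. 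Any proof of Theorem~\ref{th:main} must carry out this balancing explicitly; asserting the exponent transfers unchanged is a genuine gap.
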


We were not able to adapt one of the optimizations from Mao, thus our algorithm is slightly slower than the one for rooted trees. Note that, however the numerical value of the exponent in our result is actually better than in the original Mao's paper. It is only due to the fact that we replaced the algorithm for max-plus multiplication of two bounded-difference $n \times n$ matrices that Mao used 
(running in $\Oh(n^{2.8244})$ time) with a more recent result from Chi, Duan, Xie and Zhang ($\Oh(n^{2.687})$ time) \cite{Matrix}. Dürr \cite{Durr} showed that with this new result, Mao's algorithm works in $\Ot(mn^{1.915})$ time. Thus, our algorithm is in fact slower than Mao's algorithm.

\subsection{Technical overview}
\subparagraph*{Sketch of Mao's algorithm for rooted trees: }
First, let us note that in the unrooted tree edit distance, we consider edge-labeled trees, but Mao's algorithm works for node-labeled trees. However, the modification of Mao's algorithm to work for rooted edge-labeled trees is simple. Indeed, for both trees we can introduce a virtual root that becomes a parent of the original root and then we can associate the label of every vertex with the label of edge to its parent. 

Mao's algorithm is based on Chen's algorithm \cite{Chen}. Chen showed a completely different approach than previous algorithms that were based on Zhang and Shasha's dynamic programming scheme. He reduced the problem to the $(\min, +)$ matrix multiplication and obtained an algorithm working in $\Oh(n^4)$ time\footnote{By reduction we mean that he used $(\min, +)$ matrix multiplication as a sub-procedure in his algorithm.}. 

Mao builds on Chen's approach but considers an equivalent maximization problem of the similarity of trees. Due to this, the matrices occurring in his algorithm satisfy additional properties: they are monotone and the difference between two adjacent cells is bounded by a constant. For such matrices, there exists a truly subcubic time algorithm that computes the $(\min, +)$ product \cite{Matrix}, which is crucial for obtaining a subcubic complexity in the tree edit distance problem.

In summary, Mao's algorithm, given two trees $T_1$ and $T_2$, computes matrices $S(T_1')$ for some subtrees $T_1'$ of tree $T_1$. Every such matrix $S(T_1')$ encodes the similarity between tree $T_1'$ and all relevant fragments of tree $T_2$. These fragments are defined as segments of the Euler tour sequence\footnote{Actually, Mao used the so-called bi-order traversal sequence, but Euler tour sequence is its equivalent for edge-labeled trees.} and are formally defined in Definition \ref{def:segment}. 

Mao first shows a dynamic programming scheme based on Chen's algorithm that computes $S(T_1)$ in $\Oh(n^4)$ time. Next, he optimizes it to $\Oh(n^3)$ time by exploiting the special properties of similarity matrices. Then, to achieve a subcubic complexity, he presents a special decomposition scheme, which allows him to skip some of the subproblems. For block size $\Delta = n^d$ for some $d$ slightly smaller than $0.5$, he decomposes the computation of $S(T)$ into $\Oh(n/\Delta)$ transitions of one of two types. 

The first type is a concatenation of two trees. For that, he shows an $\Oh(t^{1-\epsilon} n^2)$ time algorithm that computes the $(\max, +)$ product of two bounded-difference matrices such that the entries in one of them are bounded by $t$. The second transition type involves expanding a subtree by adding a path going up from its root along with some additional subtrees attached to this path. For these transitions, Mao presents a special three-part combinatorial method.

\subparagraph*{Sketch of our algorithm for unrooted trees: }
To generalize Mao's algorithm to unrooted trees, we use the idea that Klein \cite{Klein} used in his $\Oh(n^3 \log n)$ time algorithm. Klein used the fact that to compute the tree edit distance between two unrooted trees, it is enough to consider arbitrary rooting of the first tree and try
all possible rootings of the second tree. 

Direct application of this fact requires solving $\Oh(n)$ particular instances of the rooted tree edit distance. However, Klein showed that his $\Oh(n^3 \log n)$ time algorithm for the rooted tree edit distance can be modified to solve all of these instances at once in the same time complexity. 
To achieve this, he used the fact that his algorithm for rooted trees computes the edit distance between some fragments of the first tree and all subtrees of the second tree that correspond to some segment of the Euler tour sequence of that tree. Since the different rootings of the second tree correspond to different cyclic shifts of the Euler tour sequence, Klein modified his algorithm so that it computes the edit distance between some fragments of the first tree and all cyclic segments of the Euler tour sequence of the second tree.

We notice that Mao's algorithm has properties similar to those of Klein's algorithm. For some of the subtrees $T_1'$ of the first tree, it computes a similarity matrix $S(T_1')$ that encodes the similarity between tree $T_1'$ and all subtrees of $T_2$ that correspond to some segment of the Euler tour sequence of that tree.  Thus, we need to modify the algorithm to handle cyclic segments of the Euler tour sequence.  To do this, we build our new similarity matrices on a doubled Euler tour sequence (see Definitions 7--9 in Section 2). Now every cyclic segment of the original Euler tour occurs as a normal segment in our sequence. 

The introduction of a doubled Euler tour requires us to make a few modifications to Mao's algorithm. The key technical challenge is handling of multiplication of new similarity matrices which we describe in Section \ref{sec:mat_mul}.

\subsection{Organization}
In Section \ref{sec:prelim} we present basic definitions and notation. We also describe there the reduction from the unrooted tree edit distance to the rooted tree edit distance and we formally define similarity matrices for unrooted trees. 

Then in Section \ref{sec:basicdp} we define our basic dynamic programming scheme that computes the similarity of trees and we show how to compute it in $\Oh(n^3)$ time. The idea behind this scheme is the same as in Mao's algorithm and we only needed to do some simple changes to adapt it to unrooted trees.

Next in Section \ref{sec:subcubic} we show how to optimize our algorithm to subcubic time using Mao's decomposition scheme. The decomposition procedure described in Section \ref{sec:decomp} is exactly the same as the one used by Mao.
Our main technical contribution is showing how to generalize Mao's matrix multiplication algorithm, which is used for the type I transition, to similarity matrices of unrooted trees. We show this in Section \ref{sec:mat_mul}. Then in Section \ref{sec:type2} we show how to implement type II transitions. This implementation is based on the one given by Mao, with some simple changes to adapt it to unrooted trees.
Finally, in Section \ref{sec:time}, we analyze the total running time in terms of :
\begin{itemize}
    \item complexity of the $(\max, +)$ matrix multiplication of similarity matrices of unrooted trees,
    \item complexity of the $(\max, +)$ matrix multiplication of bounded-difference matrices,
    \item $\omega$, the fast matrix multiplication exponent.
\end{itemize}   

\section{Preliminaries}
\label{sec:prelim}
In the tree edit distance problem we consider ordered trees with labels on edges.  
We consider both rooted and unrooted trees.
For rooted trees, ordered means that for each vertex we are given a left-to-right order of its children.
For unrooted trees, ordered means that for each vertex its neighbors form a cyclic order.  

Note that we can transform an unrooted tree into a rooted tree by choosing a root and the first edge to its children. 
This choice carries over to the other vertices and defines the order of their children.
It uniquely determines the rooting of the tree, thus there are $2(n-1)$ possible rootings for an unrooted tree with $n$ vertices.

For simplicity, we first assume that both input trees are of equal size $n$, but at the end we address the case when they have different sizes.

As we are interested in unrooted tree edit distance, we consider only trees that are edge labeled
(for rooted tree edit distance it is more common to have labels on nodes). 
In this paper, we assume that every edge is labeled from some alphabet $\Sigma$ of size $\Oh(n)$.

\begin{definition}[{(Unweighted) Tree Edit Distance}]
    Let $T_1$ and $T_2$ be rooted, ordered trees with labels on edges. We consider two types of operations:
    \begin{itemize}
        \item Label change of a selected edge in tree $T_1$ or $T_2$.
        \item Contraction of a selected edge in tree $T_1$ or $T_2$. 
        When contracting an edge $pu$ where $p$ is parent of $u$, children of $u$ become children of $p$, they replace $u$ in the children's list of $p$ and keep their order. 
    \end{itemize}
    \textit{The tree edit distance} between $T_1$ and $T_2$, denoted by $\ed(T_1, T_2)$, is the 
    minimum number of operations we have to perform on $T_1$ and $T_2$ to transform both trees into an identical tree.
\end{definition}

\begin{definition}[{(Unweighted) Unrooted Tree Edit Distance}]
    Let $T_1$ and $T_2$ be unrooted, ordered trees with labels on edges.
    We define \textit{the tree edit distance} between $T_1$ and $T_2$ 
    as the minimum edit distance over all possible rootings of $T_1$ and $T_2$.
\end{definition}

Klein \cite{Klein} mentioned, it is enough to consider arbitrary rooting of the first tree 
and try all possible rootings of the second tree to find an optimal solution for the unrooted tree edit distance. 
This should be intuitively evident, however we will include here a short proof
of that fact for completeness. 
\begin{lemma}
\label{reduction}
    Let $T_1$ and $T_2$ be unrooted trees. For every rooting of tree $T_1$ there is at least one rooting of $T_2$ 
    that admits minimum edit distance between $T_1$ and $T_2$.
\end{lemma}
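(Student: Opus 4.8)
The plan is to prove the lemma by a symmetry/relabeling argument based on the definition of rooted tree edit distance as a sequence of edit operations. Fix an arbitrary rooting $R_1$ of $T_1$. By definition of the unrooted tree edit distance, the optimum is achieved by \emph{some} pair of rootings $(R_1^\star, R_2^\star)$, i.e. there is a sequence of $\ed(T_1,T_2)$ edit operations (label changes and edge contractions on the two rooted trees) transforming both into a common rooted tree $T$. The goal is to exhibit a rooting $R_2$ of $T_2$ with $\ed\bigl((T_1,R_1),(T_2,R_2)\bigr) \le \ed(T_1,T_2)$; since the reverse inequality is immediate (any particular pair of rootings gives an upper bound that is at least the minimum), this yields equality.

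First I would make precise the observation that an edit operation sequence does not really "see" the root: contracting an edge and changing a label are operations on the underlying edge-labeled tree together with the cyclic/ordered structure, and the only role of the root is to orient every edge (parent--child) and thereby pin down which of the two cyclic shifts at each vertex is "first". Concretely, I would argue that if $S$ is a sequence of operations witnessing $\ed\bigl((T_1,R_1^\star),(T_2,R_2^\star)\bigr)$, then "re-rooting" $T_1$ from $R_1^\star$ to $R_1$ corresponds to a well-defined re-rooting of the intermediate common tree $T$ and of $T_2$: following the path in $T_1$ from the old root to the new root, the corresponding vertices survive in $T$ (or, if some edge on that path gets contracted, we can track where the root "lands" after contraction), and this induces a rooting $R_2$ of $T_2$. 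The same sequence $S$ of operations, reinterpreted on the re-rooted trees, still transforms $(T_1,R_1)$ and $(T_2,R_2)$ into the common re-rooted $T$, and has the same length. Hence $\ed\bigl((T_1,R_1),(T_2,R_2)\bigr) \le |S| = \ed(T_1,T_2)$.

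I would organize the write-up as follows. Step 1: show that for a single rooted tree, choosing a different root gives a tree on which the same operation sequence can be replayed with identical effect on the underlying unrooted structure — essentially because contraction and label-change commute with re-rooting (contracting $pu$ and then re-rooting gives the same unrooted result as re-rooting and then contracting the corresponding edge). Step 2: apply this to the witness sequence for the optimal pair $(R_1^\star,R_2^\star)$, re-root $T_1$ to the prescribed $R_1$, push the re-rooting through the operation sequence to the common tree $T$, and pull it back to obtain a rooting $R_2$ of $T_2$. Step 3: conclude $\ed\bigl((T_1,R_1),(T_2,R_2)\bigr)\le \ed(T_1,T_2)$, and combine with the trivial lower bound to finish.

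The main obstacle is Step 1 done carefully: when an edge on the old-root-to-new-root path is contracted at some point in the sequence, the "new root" vertex may be merged into another vertex, so one must specify precisely which vertex of each intermediate tree plays the role of the root and check that this choice is consistent across the whole sequence and matches up on both sides with the same common tree. Handling this bookkeeping — and in particular verifying that the resulting orientation of $T_2$'s edges is a genuine rooting consistent with the cyclic order — is where the real content lies; once that is set up, the length-preservation and the inequality are immediate.
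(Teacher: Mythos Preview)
Your proposal is correct and follows essentially the same approach as the paper: take the common tree $T^*$ reached in an optimal solution, observe that the given rooting of $T_1$ induces a rooting of $T^*$, and then extend that rooting back to a rooting of $T_2$. The paper compresses your Steps~1--3 into two sentences (``it induces a rooting of $T^*$'' and ``any rooting of $T^*$ can be extended to some rooting of $T_2$''), declaring the bookkeeping you worry about to be easy; your more explicit tracking of the root through the operation sequence is exactly what justifies those two sentences, so the difference is purely one of detail level, not of method.
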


\begin{proof}
    Let $T^*$ denote the common tree into which both trees $T_1$ and $T_2$ are transformed in an optimal solution. Consider any rooting of tree $T_1$. It induces a rooting of $T^*$. 
    However, it is easy to see that any rooting of $T^*$ can be extended to some rooting of tree $T_2$.
    Thus, edit distance between these rootings is optimal among all possible rootings of $T_1$ and $T_2$.
\end{proof}

Same as Mao, we consider a maximization problem equivalent to the tree edit distance problem. 

\begin{definition}[{Similarity}]
    The \textit{similarity} between two rooted trees $T_1$ and $T_2$ is defined as $\simi(T_1, T_2) = |E(T_1)| + |E(T_2)| - \ed(T_1, T_2)$.
\end{definition}
Similarity can also be interpreted as the weight of the heaviest matching between the edges of the tree $T_1$ and $T_2$, where the cost of edge matching is $2$ when edges have the same labels and $1$ when they are different. In addition, the matching must respect the tree structure, which means:
\begin{itemize}
    \item if edge $a \in T_1$ is matched to edge $b \in T_2$, then edges in the subtree of $a$ can be matched only to edges in the subtree of $b$,
    \item if edge $a \in T_1$ is matched to edge $b \in T_2$ and $c \in T_1$ is matched to $d \in T_2$, then $a$ is ``to the left'' of $c$ if and only if $b$ is ``to the left'' of $d$.
\end{itemize}
Note that $0 \leq \simi(T_1, T_2) \leq 2 \min(|E(T_1)|, |E(T_2)|)$.

\begin{definition}
    By $\eta(e, f)$ we denote the cost of matching edges $e$ and $f$, that means $\eta(e, f)=2$ if labels of these edges are equal and $\eta(e, f)=1$ otherwise. 
\end{definition}

\subparagraph*{Tree definitions:} For a rooted tree $T$, by $L_T$ we denote the subtree of the first (leftmost) child of the root of $T$ along with the edge to the root of $T$. Similarly, by $R_T$ we denote the subtree of the last (rightmost) child of the root of $T$ along with the edge to the root of $T$. Given an edge $e$ in a rooted tree, we use $sub(e)$ to represent the subtree rooted at edge $e$.

For two rooted trees $T_1$ and $T_2$, by $T_1 + T_2$ we denote the tree formed by merging the roots of tree $T_1$ and $T_2$ such that edges from the tree $T_1$ are ``to the left'' of the edges from the tree $T_2$. 

For two trees (rooted/unrooted) $T_1$ and $T_2$ such that $T_1 \subseteq T_2$, by $T_2 - T_1$ we denote the tree formed from tree $T_2$ by contracting all edges that appear in tree $T_1$.

To describe the ``fragments'' of a tree that we will consider in our algorithm, we use segments of the Euler cycle of the tree. 
Due to technical reasons, instead of dealing with a cyclic sequence, we consider a doubled
Euler tour sequence.

\begin{definition}
    Let $T$ be an unrooted tree. Consider the walk on this tree that starts at an arbitrary edge and goes twice through the Euler Tour, which visits neighbors according to their order.
    \begin{enumerate}[(a)]
        \item For $i \in \set{1,\ldots, 4|E(T)|}$ by $T(i)$ we denote the $i$-th edge of this walk. 
        \item By $I(e)_i$ we denote the index of the $i$-th occurrence of the edge $e$ in this walk.
        \item By $l_{i,j}(e)$ we denote the index of the first occurrence of the edge $e$ in $T(i),\ldots, T(j-1)$. 
        \item By $r_{i,j}(e)$ we denote the index of the second occurrence of the edge $e$ in $T(i),\ldots, T(j-1)$. 
    \end{enumerate}
\end{definition}

Note that each edge appears exactly 4 times in this walk.

\begin{definition}[Segment]\label{def:segment}
    Given an unrooted tree $T$ and integers $l$, $r$ such that $1\leq l \leq r \leq 4|E(T)|$ and $r - l \leq 2|E(T)|$, by $T[l, r)$ we denote the tree formed from the tree $T$ by contracting every edge that occurs less than $2$ times in $T(l),\ldots,T(r-1)$.
\end{definition}

Let us note here that despite the fact that our input tree is unrooted,
we can view the segments of this tree as rooted trees. Indeed, the first non-contracted edge
that appears in the segment defines the rooting of this segment.
Thus, the segments of length $2|E(T)|$ define all possible rootings of the tree $T$. See Figure \ref{fig:euler}.

\begin{figure}[!h]
    \begin{center}
    \includegraphics[scale=1.3]{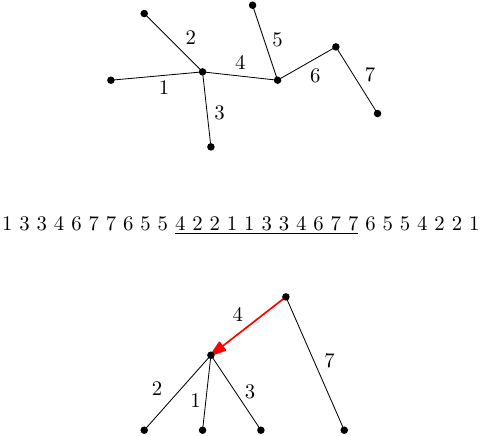}
    \caption{Example unrooted tree $T$ with its doubled Euler tour sequence and segment $T[11,22)$ of this tree. The edge marked as red defines the rooting of this segment.}\label{fig:euler}
    \end{center}
\end{figure}

\subparagraph*{Matrix definitions:}
Now we are ready to define similarity matrix -- it encodes information about similarity of the
whole rooted tree $T$ and all ``relevant fragments'' of unrooted tree $Q$.
\begin{definition}[Similarity matrix]
    Given a rooted tree $T$ and an unrooted tree $Q$ \textit{similarity matrix} $S(T, Q)$ is a matrix of size 
    $4|E(Q)| \times 4|E(Q)|$, where:
    $$
			S(T,Q)_{i,j} =
				\begin{cases}
					\simi(T, Q[i,j)) & \text{ if } i \leq j \text { and } j - i \leq 2|E(Q)| \\
					-\infty & \text{ if } i > j \text { or } j - i > 2|E(Q)|
				\end{cases}
    $$
\end{definition}
For simplicity, we also introduce additional notation for naming of similarity matrix cells.
We divide them into three types:
\begin{itemize}
    \item \textit{valid cells} -- cells $S(T, Q)_{i, j}$ for which $i \leq j$ and $j - i \leq 2|E(Q)|$,
    \item \textit{invalid cells} -- cells $S(T, Q)_{i, j}$ for which $j - i > 2|E(Q)|$,
    \item \textit{cells under diagonal}.
\end{itemize}
During our algorithm we will allow invalid cells to store values other than $-\infty$.
Because of that, we introduce a special equality relation for similarity matrices:
\begin{definition}
    By $=_{valid}$ we denote the relation on matrices which is true if and only if they are equal on the set of valid cells. 
\end{definition}

\begin{definition}
    Let $A$ and $B$ be $n \times n$ matrices. By $A \star B$ we denote their $(\max, +)$ product i.e. $(A \star B)_{i, j} = max_{1 \leq k \leq n} A_{i, k} + B_{k, j}$.
\end{definition}

Now, we define a few properties of matrices. Given an $n \times n$ matrix $A$ we call it:
\begin{enumerate}[(a)]
    \item \textit{finite-upper-triangular} matrix if all entries below diagonal are $-\infty$ and the rest are finite,
    \item \textit{row-monotone} matrix if $A_{i,j} \leq A_{i, j+1}$ for all $i,j$,
    \item \textit{column-monotone} matrix if $A_{i+1,j} \leq A_{i, j}$ for all $i,j$,
    \item \label{bound} \textit{$W$-bounded-difference} matrix if for all $i, j$ we have:
    \begin{gather*}
        |A_{i,j} - A_{i-1, j}| \leq W \\
        |A_{i,j} - A_{i, j+1}| \leq W
    \end{gather*} 
    \item \textit{bounded-difference} matrix if it is a $W$-bounded-difference matrix for some $W = \Oh(1)$,
    \item \textit{finite-upper-triangular-$W$-bounded-difference} matrix if it is a finite-upper-triangular matrix and property \ref{bound} holds for all $i \leq j$,
    \item \textit{$M$-bounded similarity matrix} if it is a similarity matrix and all finite entries are integers between $0$ and $M$.
\end{enumerate}

It is easy to see that for similarity matrices $S(T, Q)$ the properties row-monotone and column monotone hold for all cells except invalid cells. The following lemma, which is an analogy to Lemma 4.1 from Mao \cite{Mao}, tells that the finite-upper-triangular-$2$-bounded-difference property also holds for all cells except invalid cells.
\begin{lemma}
    Given a rooted tree $T$ and an unrooted tree $Q$, we have:
    \begin{itemize}
        \item $\simi(T, Q[i,j+1)) \leq \simi(T, Q[i, j)) + 2$ for all $i\leq j, j-i+1 \leq 2|E(Q)|$.
        \item $\simi(T, Q[i-1,j)) \leq \simi(T, Q[i, j)) + 2$ for all $i \leq j, j-i+1 \leq 2|E(Q)|$.
    \end{itemize} 
\end{lemma}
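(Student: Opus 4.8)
The plan is to prove both inequalities by the same argument; I will spell out the first bullet (passing from the window $Q(i),\dots,Q(j-1)$ to $Q(i),\dots,Q(j)$) and note that the second one is symmetric. The combinatorial heart of the proof is the claim that \emph{$Q[i,j)$ is obtained from $Q[i,j+1)$ by contracting a single edge} (or the two segments are literally equal). By definition an edge belongs to $Q[l,r)$ precisely when it occurs at least twice among $Q(l),\dots,Q(r-1)$, so the non-contracted edges of $Q[i,j+1)$ are exactly those of $Q[i,j)$, possibly together with one more edge: namely $f:=Q(j)$, which enters the segment exactly when position $j$ is its second occurrence in the window. Hence $E(Q[i,j))\subseteq E(Q[i,j+1))$ and $|E(Q[i,j+1))\setminus E(Q[i,j))|\le 1$, and since contracting a set of edges equals contracting any subset of it and then the rest, $Q[i,j)$ equals $Q[i,j+1)$ with $f$ contracted whenever $f$ exists. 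I would additionally verify that this equality is compatible with the ``first non-contracted edge'' rooting convention for segments: the distinguished root edge of the segment may move, but only in a way consistent with contracting $f$, so the equality holds at the level of rooted trees. If no new edge appears, $Q[i,j)=Q[i,j+1)$ and the inequality is trivial. I expect this step --- tracking exactly how a segment changes when the Euler-tour window is shifted by one position --- to be the only real obstacle.

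Granting the claim, the inequality follows from the weighted-matching interpretation of $\simi$. Fix an optimal structure-respecting matching $M$ between the edges of $T$ and the edges of $Q[i,j+1)$, of weight $\simi(T,Q[i,j+1))$. Discard the pair of $M$ incident to $f$, if there is one; since any matched pair contributes at most $2$ to the weight (each $\eta$ value is $1$ or $2$), this decreases the weight by at most $2$. The surviving matching $M'$ uses only edges in $E(Q[i,j+1))\setminus\{f\}=E(Q[i,j))$, and because contracting $f$ preserves the ancestor--descendant order and the left-to-right order of all the remaining edges, $M'$ is still a valid structure-respecting matching, now between $E(T)$ and $E(Q[i,j))$. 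Therefore $\simi(T,Q[i,j))\ge \mathrm{weight}(M')\ge \mathrm{weight}(M)-2=\simi(T,Q[i,j+1))-2$, which is the claim. For the second bullet one repeats the argument, extending the window on the left by the letter $Q(i-1)$ instead of on the right by $Q(j)$, and contracting that edge.

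As an alternative to the matching picture one can argue purely through $\ed$: contracting $f$ shows $\ed(Q[i,j+1),Q[i,j))\le 1$, and combining this with the triangle inequality for edit distance, the identity $\simi=|E(T)|+|E(Q)|-\ed$, and $|E(Q[i,j+1))|=|E(Q[i,j))|+1$ yields the same bound. Either way, once the ``off-by-one contraction'' relation between consecutive segments is established, the similarity estimate is immediate and mirrors Lemma~4.1 of Mao; the Euler-tour bookkeeping of the first paragraph is where essentially all the work (and the only subtlety, namely the behaviour of the rooting convention) lies.
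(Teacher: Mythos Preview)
Your proposal is correct and follows essentially the same approach as the paper: the paper's entire proof is the single sentence ``when we extend a tree segment by one we can get at most one additional edge which can contribute at most $2$ to the cost of the matching,'' which is exactly the combinatorial fact you isolate and then feed into the matching interpretation of $\simi$. Your write-up simply unpacks this in more detail (including the rooting check and the alternative $\ed$-based argument), but the underlying idea is identical.
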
 
To prove this lemma it is enough to note that when we extend a tree segment by one we can get at most one additional edge which can contribute at most $2$ to the cost of the matching.

\section{Cubic algorithm for Unrooted Tree Edit Distance}
\label{sec:basicdp}

In this section, we introduce a basic dynamic programming scheme that computes the similarity of trees. 
It is the same scheme that Mao used, except that we use slightly different similarity matrices. 
We show how to efficiently compute this dynamic programming scheme in $\Ot(n^3)$ time. This gives us a basic algorithm that will be later used in our subcubic algorithm.  

As mentioned in Lemma \ref{reduction}, it is enough to consider an arbitrary rooting of 
the first tree and try all possible rootings of the second tree to find an optimal solution.
Because of that, we can assume that we are given a rooted tree $T$, an unrooted tree $Q$
and we want to compute the edit distances between $T$ and every rooting of $Q$. 
To achieve that, we will compute the similarity matrix $S(T, Q)$. 
We will do this by computing recursively similarity matrices $S(T', Q)$ where $T'$ 
is some fragment of tree $T$. As the second argument of $S(T', Q)$ will always be $Q$ in our algorithm, we will use a shorthand $S(T') := S(T', Q)$.
Additionally, we call $S(T')$ the similarity matrix of tree $T'$.

\subsection{Dynamic programming scheme}

To compute $S(T')$, we consider the following cases:
\begin{enumerate}[(a)]
    \item \label{dp_a} $T'$ has no edges (it is a single vertex). Then the values in all valid cells are $0$.
    \item \label{dp_b} The root of $T'$ has only one child. Let $r$ be the root and $u$ be its only child. We choose if we want to match $ru$ to some edge $e$ of $Q[i,j)$ or not:
    \begin{equation*}
		S(T')_{i,j} = \max
			\begin{cases}
				S(T' - ru)_{i,j}\\
				\max_{e \in Q[i;j)} \set{ S(T' - ru)_{l_{i,j}(e)+1,r_{i,j}(e)} + \eta(ru, e) }
			\end{cases}
	\end{equation*}
    \item \label{dp_c} The root of $T'$ has more than one child. Let $R_{T'}$ be the subtree of the last child of the root of $T'$ along with the edge to the root of $T'$. Then:
    \begin{equation*}
        S(T')_{i,j} = \max_{i \leq k \leq j}\set { S(T' - R_{T'})_{i,k} + S(R_{T'})_{k,j}} 
    \end{equation*}
    In other words: $S(T') = S(T' - R_{T'}) \star S(R_{T'})$.
\end{enumerate}
The correctness of the first two cases is obvious. To see the correctness of the third case, let us first notice that our formula covers the case when all edges from $Q[i, j)$ are matched 
only to the edges of $T' - R_{T'}$. The same is true when they are matched only to the edges of $R_{T'}$. 
Otherwise, let us take any edge $e \in Q[i,j)$ that is matched to some edge in $T' - R_{T'}$ and any edge $f \in Q[i, j)$ that is matched to some edge in $R_{T'}$. From properties of the matching we get that both occurrences of $e$ in $Q(i),\ldots,Q(j-1)$ need to be before both occurrences of $f$ in $Q(i),\ldots,Q(j-1)$. From this we get that it is enough to consider each division of $Q[i, j)$ into two segments.

Now, let us estimate the time complexity of this approach. First, we notice that the number of recursive calls is linear: every tree $T'$ that we consider is either a prefix of children, with their subtrees, of some node or it is a subtree of some node with edge to its parent. Naive computation of cases (b) and (c) requires $\Oh(n^3)$ time. 
Thus, the algorithm works in $\Oh(n^4)$ time.

\subsection{Optimization to cubic} \label{sec:cubic}
To optimize our algorithm to $\Ot(n ^ 3)$ time, we will exploit special properties of similarity matrices. We will rely on the fact that $S(T')$ is a $2|E(T')|$-bounded similarity matrix.

Furthermore, we will use the fact that these matrices are almost row-monotone and column-monotone. The only exception to this property are invalid cells, however their value is always $-\infty$. 
Because of that, we are able to use the same computation model that Mao used to store row-monotone, column-monotone matrices with slight modifications to handle invalid cells. 

Let $l := 2n -2$. We consider the following operations for $2l \times 2l$ matrices:
\begin{itemize}
    \item
        create a new $2l \times 2l$ matrix $[-\infty]_{2l,2l}$,
    \item
        given a matrix $A$, create a copy of this matrix,
    \item
        given a matrix $A$, indexes $i', j'$ and a value $x$,
        create a new  $2l \times 2l$ matrix \\ $B := \rangemax(A, i', j', x)$, such that:
        \begin{equation*}
            B_{ij} =
                \begin{cases}
                    \max(A_{ij}, x) & \text{ if $i \leq i'$ and $j \geq j'$ and $j-i \leq l$} \\
                    A_{ij} & \text{ otherwise }
                \end{cases}
        \end{equation*}
\end{itemize}
Furthermore, given matrix $A$ we consider the following queries: 
\begin{itemize}
    \item get the value $A_{i,j}$,
    \item get the index $\mincol(A, i, x) = \min \{ j \mid A_{ij} \geq x \}$ or any index in $[1, 2l]$ if such an index does not exist,
    \item get the index $\maxrow(A, j, x) = \max \{ i \mid A_{ij} \geq x \}$ or any index in $[1, 2l]$ if such an index does not exist.
\end{itemize}
Note that the underlying data structure can keep a row-monotone, column-monotone matrix, and we can just modify queries to return $-\infty$ when reading invalid cells. This means we don't need to have the $j-i \leq l$ condition in the $\rangemax$ operation, thus we can use a simple 2D $\max$ operation (the same as Mao). All these operations can be performed with well-known data structures, such as persistent 2D segment trees in $\Ot(1)$ time.

From now on, we assume that we store similarity matrices using the above computation model. 
We will go through all three cases of our dynamic programming scheme and we will show how to efficiently solve them using the above data structure:

\subparagraph*{(\ref{dp_a}):}
We can easily initialize all valid cells to $0$ using $\Oh(n)$ $\rangemax$ queries. This gives us $\Ot(n^2)$ time for the entire algorithm.

\subparagraph*{(\ref{dp_b}):}
Let us recall the equation for this case:
\begin{equation*}
    S(T')_{i,j} = \max
        \begin{cases}
            S(T' - ru)_{i,j}\\
            \max_{e \in Q[i;j)} \set{ S(T' - ru)_{l_{i,j}(e)+1,r_{i,j}(e)} + \eta(ru, e) }
        \end{cases}
\end{equation*}
Naive computation of this equation relies on iterating through all cells we want to compute and for each $S(T')_{i, j}$ we iterate through each edge of $Q[i, j)$. Instead, we can do the opposite and iterate through each edge of $Q$ and update for each of them all relevant cells of $S(T')$. Thus, we first initialize $S(T')$ with $S(T'-ru)$ and then for each edge $e$ of $Q$ we make three $\rangemax$ operations. See Algorithm \ref{dp_b_opt} for exact values of the parameters.
\begin{algorithm}[!htb]
    \caption{Computation of case \ref{dp_b}}\label{dp_b_opt}
    \hspace*{\algorithmicindent} \textbf{Input} $T, Q, S(T'-ru)$\\
    \hspace*{\algorithmicindent} \textbf{Output} $S(T')$
    \begin{algorithmic}[1]
        \State $S(T') \gets S(T'-ru)$
        \ForAll{$e \in E(Q)$}
            \State $S(T') \gets \rangemax(S(T'), I(e)_1, I(e)_2 + 1, S(T' - ru)_{I(e)_{1}+1,I(e)_{2}} + \eta(ru, e) )$
            \State $S(T') \gets \rangemax(S(T'), I(e)_2, I(e)_3 + 1, S(T' - ru)_{I(e)_{2}+1,I(e)_{3}} + \eta(ru, e) )$
            \State $S(T') \gets \rangemax(S(T'), I(e)_3, I(e)_4 + 1, S(T' - ru)_{I(e)_{3}+1,I(e)_{4}} + \eta(ru, e) )$
        \EndFor
        \State \Return $S(T')$
    \end{algorithmic}
\end{algorithm}
A single computation of this case takes $\Ot(n)$ time, which gives us $\Ot(n^2)$ time for the entire algorithm.

\subparagraph*{(\ref{dp_c}):}
This case is a matrix multiplication $S(T') = S(T' - R_{T'}) \star S(R_{T'})$. To speed up the computation, we will relay on the fact that $S(T')$ is a $2|E(T')|$ -bounded similarity matrix.
The following lemma, which corresponds to Lemma 4.3 from Mao \cite{Mao}, shows that we can multiply similarity matrices in complexity dependent on the bounds of values of these matrices:

\begin{lemma}\label{mul_bounded}
    Let $A, B$ be $l \times l$ matrices. If $A$ is a $t_A$-bounded similarity matrix and $B$ is a $t_B$-bounded similarity matrix, then we can compute $C = A \star B$ in $\Ot(t_At_Bl)$ time.
\end{lemma}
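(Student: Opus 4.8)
The plan is to reduce the $(\max,+)$ product of two bounded similarity matrices to a small number of $(\max,+)$ products of bounded-difference matrices, each of which can be handled by the known subcubic algorithm of Chi, Duan, Xie and Zhang. Recall that a $t$-bounded similarity matrix $A$ is, away from its invalid cells, row-monotone, column-monotone and $2$-bounded-difference, and all its finite entries lie in $[0,t]$; the invalid cells carry $-\infty$ and correspond to $j-i > 2|E(Q)| = l$ (equivalently to a band of width $l$ around the diagonal). So, morally, $A$ "really" only has a band of $\Oh(l)$ diagonals that matter, the finite entries in that band span a range of only $t_A$, and similarly for $B$ with range $t_B$.

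First I would make the band structure explicit. Both $A$ and $B$ are finite only on the band $0 \le j-i \le l$ (indices in $[1,2l]$), and for the product $C = A\star B$ the valid cells are again those with $0\le j-i\le l$. For a fixed valid cell $(i,j)$, in the maximization $\max_k A_{i,k}+B_{k,j}$ we only need indices $k$ with $A_{i,k}$ finite (so $i\le k\le i+l$) and $B_{k,j}$ finite (so $j-l\le k\le j$); since $j-i\le l$ this range of $k$ has length at most $l$, and moreover it is contained in a window of $\Oh(l)$ consecutive indices that depends only on, say, $\lfloor i/l\rfloor$. Thus I would tile the index set $[1,2l]$ into $\Oh(1)$ overlapping blocks of size $\Theta(l)$ and observe that computing all valid cells of $C$ reduces to a constant number of $(\max,+)$ products of $\Theta(l)\times\Theta(l)$ submatrices $A',B'$, where $A'$ is a contiguous row-and-column block of $A$ and $B'$ a contiguous block of $B$, chosen so that every $k$ relevant to the block of $C$ being computed lies inside the shared block.

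Next I would turn each such submatrix pair into bounded-difference matrices. Inside a block, $A'$ is row-monotone, column-monotone and $2$-bounded-difference, but it may still contain $-\infty$ entries (the corner beyond the band). Following Mao's Lemma 4.3, I would replace those $-\infty$'s by finite "filler" values that keep the matrix $\Oh(1)$-bounded-difference and do not create spurious large products: concretely, extend each row to the left/right by continuing with the boundary value (or by a value that is provably too small to ever win the maximum for a valid target cell), so that $A'$ becomes a genuine $\Oh(1)$-bounded-difference matrix whose entries lie in an interval of length $\Oh(t_A + l')$ where $l'=\Theta(l)$ is the block size — and here I would be more careful: since within a block of width $\Theta(l)$ a $2$-bounded-difference matrix already has entry-range $\Oh(l)$, the ``$t_A$-bounded'' hypothesis is what lets us instead subdivide each block further into $\Oh(l/t_A)$ sub-blocks along rows and $\Oh(l/t_B)$ along columns, so that on each sub-block $A'$ has range $\Oh(t_A)$ and $B'$ has range $\Oh(t_B)$; after shifting by a constant we may also arrange the two ranges to be comparable. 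Then the $(\max,+)$ product of a $p\times q$ bounded-difference matrix by a $q\times r$ bounded-difference matrix, with $p,q,r = \Theta(\max(t_A,t_B))$, costs $\Ot((\max(t_A,t_B))^{\omega'})$ by the bounded-difference matrix multiplication algorithm, which is subcubic; padding to square and summing over all $\Oh(l/t_A)\cdot\Oh(l/t_B)\cdot\Oh(1)$ block triples (using $\Oh(1)$ for the row/column/inner block index since everything lives in a band of width $l$) gives total time $\Ot\big(\tfrac{l}{t_A}\cdot\tfrac{l}{t_B}\cdot \min(t_A,t_B)\cdot (\max(t_A,t_B))^2\big)$; a more careful bookkeeping, summing the honest rectangular costs rather than padding, collapses this to $\Ot(t_A t_B l)$, which is the claimed bound. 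Finally I would take the cellwise maximum over all block contributions to assemble $C$, and reset the invalid cells to $-\infty$ (or simply leave them, since we work with $=_{valid}$), which costs $\Ot(l^2)\le \Ot(t_A t_B l)$.

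The main obstacle I anticipate is the second step: choosing the filler values for the $-\infty$ corner so that the resulting matrices are simultaneously (i) globally $\Oh(1)$-bounded-difference, (ii) of controlled entry-range within each sub-block, and (iii) harmless, i.e. never producing a value $A_{i,k}+B_{k,j}$ that exceeds the true $\simi(T,Q[i,j))$ for a valid $(i,j)$. Getting all three at once requires using the band geometry (the fact that for a valid target cell the winning $k$ is forced into a short, known window) together with the monotonicity of similarity matrices; this is exactly the place where Mao's argument for square matrices has to be re-examined for our doubled-Euler-tour matrices, but the structural facts needed — row/column monotonicity off the invalid cells, $2$-bounded-difference off the invalid cells, and the explicit $-\infty$ pattern being a single band — are all already established above, so the adaptation is routine once the block decomposition is set up correctly.
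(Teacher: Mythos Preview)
Your proposal has a genuine gap: the approach via bounded-difference matrix multiplication cannot achieve the claimed $\Ot(t_At_Bl)$ bound, and in fact misses the point of the lemma entirely.

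The most basic obstruction is output size. The matrix $C$ has $\Theta(l^2)$ valid cells, so any algorithm that writes them out explicitly (as the Chi--Duan--Xie--Zhang product does) already spends $\Omega(l^2)$ time. But $t_At_Bl$ can be much smaller than $l^2$; indeed the whole purpose of this lemma in the paper is the regime $t_A,t_B = o(\sqrt l)$, where the product is computed in \emph{sublinear-in-the-output} time. Your final remark ``$\Ot(l^2)\le\Ot(t_At_Bl)$'' is simply false in that regime, and no amount of blocking or rectangular-product bookkeeping can rescue the argument once you commit to producing explicit entries. The bounded-difference multiplication is used elsewhere (Theorem~\ref{th:fast_mul}), for a different tradeoff; it is not the tool for this lemma.

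The paper's proof is purely combinatorial and relies on the implicit representation via the persistent-segment-tree data structure introduced just before the lemma. The key observation is that $A$ takes only $t_A+1$ distinct finite values and $B$ only $t_B+1$, so $C_{i,j}$ can only be one of $\Oh(t_At_B)$ values. For each column $j$, each target value $x\in[0,t_B]$ and each $y\in[0,t_A]$, one uses two $\maxrow$ queries to locate $k=\maxrow(B,j,x)$ and then $i=\maxrow(A,k,y)$, and a single $\rangemax(C,i,j,A_{i,k}+B_{k,j})$ to push the candidate value to every cell $(i',j')$ with $i'\le i$, $j'\ge j$. Row- and column-monotonicity (on valid cells) guarantee both that no overshoot occurs and that the optimal $k^*$ for each valid $(i',j')$ is recovered at the step $j=j'$, $x=B_{k^*,j'}$, $y=A_{i',k^*}$. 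That is three $\Ot(1)$ data-structure operations inside a triple loop of size $l\cdot(t_B{+}1)\cdot(t_A{+}1)$, which is exactly $\Ot(t_At_Bl)$.
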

Algorithm \ref{alg_mul_bounded} that computes $C = A\star B$ is the same as Algorithm 1 from Mao \cite{Mao}.
\begin{algorithm}[!htb]
    \caption{}\label{alg_mul_bounded}
    \begin{algorithmic}[1]
        \Function{MulBounded}{$A, B$}
            \State $C \gets [-\infty]_{l,l}$
            \ForAll{$j \in [1,l]$}
                \ForAll{$x \in [0,t_B]$}
                    \ForAll{$y \in [0,t_A]$}
                        \State $k \gets \maxrow(B, j, x)$
                        \State $i \gets \maxrow(A, k, y)$
                        \State $C \gets \rangemax(C, i, j, A_{i,k} + B_{k,j})$ \label{alg_mul_bounded_rangemax}
                    \EndFor
                \EndFor
            \EndFor
            \State \Return $C$
        \EndFunction
    \end{algorithmic}
\end{algorithm}

To prove the correctness of this algorithm, let us first notice that we update values of the resulting matrix only if all three pairs of indexes from line \ref{alg_mul_bounded_rangemax}: $(i, j)$, $(i, k)$, $(k, j)$ define valid cells. Otherwise, either one of $A_{i, k}$, $B_{k, j}$ is an invalid cell and we pass $-\infty$ as a value to the $\rangemax$ operation, or $(i, j)$ defines an invalid cell and the $\rangemax$ operation does not affect any cell.

Therefore, let us consider only $\rangemax$ operations for which all three pairs $(i, j)$, $(i, k)$, $(k, j)$ correspond to some valid cells. Let $C = A \star B$. Note that $A_{i, k} + B_{k, j} \leq C_{i, j}$ and due to the fact that row-monotone and column-monotone properties hold for all valid cells, we also have $A_{i, k} + B_{k, j} \leq C_{i', j'}$ for all valid cells $C_{i', j'}$ such that $i' \leq i$ and $j \leq j'$. 
Thus, in our resulting matrix we will not have any values that are too big.

It remains to prove that, for all resulting cells, the optimal values are achievable. Let us take any valid cell $C_{i', j'}$. Let $k^*$ be an index such that $C_{i', j'} = A_{i', k^*} + B_{k^*, j'}$. Now, let us consider the step of our algorithm when $j = j', x = B_{k^*, j'}$, and $y = A_{i', k^*}$. Due to the row-monotone, column-monotone properties and the fact that $C_{i', j'} = A_{i', k^*} + B_{k^*, j'}$ we first get that $k \geq k^*$ and $B_{k, j} = B_{k^*, j'}$. 
Then from the same argument we get that $i \geq i'$ and $A_{i, k} = A_{i', k^*}$. So, the $\rangemax$ operation will update $C_{i',j'}$ with $A_{i', k^*} + B_{k^*, j'}$. 

Thus, we proved that every time we multiply similarity matrices of trees $T' - R_{T'}$ and $R_{T'}$, we contribute $\Ot(|T' - R_{T'}||R_{T'}|n)$ time to the total time complexity of this case.  
To compute the sum of these components, we can correlate $|T' - R_{T'}||R_{T'}|$ with the number of pairs of edges $(e, f)$ where $e \in T' - R_{T'}$ and $f \in R_{T'}$.
Let us sum these pairs over all multiplications. It is easy to see that each pair of edges from tree $T'$ appears at most once in such a sum. Thus, the total time complexity of this case is equal to $\Ot(n^3)$.

\medskip
After considering all three cases, we can see that the whole algorithm works in $\Ot(n^3)$ time. Note here that if the input trees are of different sizes $n$ and $m$, then this algorithm works in $\Ot(nm^2)$ time.

\section{Subcubic algorithm for unrooted tree edit distance}
\label{sec:subcubic}

In this section, we show how to get a subcubic algorithm for the unrooted tree edit distance problem.
Our algorithm is based on the same decomposition scheme as Mao's algorithm.
However, we will start by showing a different bottom-up view at $\Ot(n^3)$ algorithm. 
This will give us a better intuition about Mao's decomposition scheme.

We can view the $\Ot(n^3)$ algorithm as a decomposition scheme that computes $S(T)$ using the following two types of transitions:
\begin{itemize}
    \item \textbf{Type I:} for trees $T_1, T_2$: 
    \begin{center}
        compute $S(T_1 + T_2)$ from $S(T_1)$ and $S(T_2)$
    \end{center}
    \item \textbf{Type II:} for trees $T_1, T_2$ such that $T_2$ is $T_1$ with one added edge going from the root up:
    \begin{center}
        compute $S(T_2)$ from $S(T_1)$
    \end{center}
\end{itemize}

Note that the total time complexity of type I transitions is $\Ot(n^3)$, while the total complexity of type II transitions is $\Ot(n^2)$. This gives us an idea that to get a subcubic complexity we can try to balance these transitions. To reduce the number of type I transitions, we can generalize transitions of type II. Instead of adding a single edge going up from the root of $T_1$, we can add a ``hat'', that is, a path going up from the root along with some additional subtrees. This is the general idea of Mao's decomposition scheme, which we will now formally describe. See Figure \ref{fig:czap} for an illustration of the ``hat'' structure.

First, we introduce additional definition that will help us define sub-problems occuring in the decomposition scheme. This definition corresponds to the synchronous subtree definition from Mao. 
\begin{definition}[Connected segment]
    Given an unrooted tree $T$, segment $T[i,j)$ is called a \textit{connected segment} if there is no edge that occurs in $T(i),\ldots,T(j-1)$ exactly once.        
\end{definition}
A connected segment can be alternatively defined by a selection of a vertex $v \in T$ and a connected interval of children of the vertex $v$ (we call $v$ the root of the segment), so that subtrees of these children along with the edges from $v$ to these children belong to this segment. Thus, a connected segment forms a connected subtree of tree $T$. 

Now let $\Delta$ be the block size -- a parameter, that we will set later. We decompose the computation of $S(T)$ into $\Oh(n/\Delta)$ transitions of two types:
\begin{itemize}
    \item \textbf{Type I:} for connected segments $T_1, T_2$ such that $|T_1| \geq \Delta$ and $|T_2| \geq \Delta$:
    \begin{center}
        compute $S(T_1 + T_2)$ from $S(T_1)$ and $S(T_2)$
    \end{center}
    \item \textbf{Type II:} for connected segments $T_1, T_2$ such that $T_1 \subset T_2$ and $|T_2| - |T_1| = \Oh(\Delta)$:
    \begin{center}
        compute $S(T_2)$ from $S(T_1)$
    \end{center}
\end{itemize}
To compute this decomposition, we will use the same algorithm as Mao did.

The following theorem, which corresponds to Theorem 4.5 from Mao \cite{Mao}, tells us how we can perform a type I transition efficiently. We prove this in Section \ref{sec:mat_mul}.

\begin{theorem}\label{th:fast_mul}
    Let $A$, $B$ be $n \times n$ similarity matrices of unrooted trees. 
    If $A$ is $t$-bounded, then $C = A \star B$ can be computed in $\mul(t, n) := \Ot(t^{0.8145}n^2)$ time.
\end{theorem}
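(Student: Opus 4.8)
The plan is to adapt the proof of Mao's Theorem~4.5. There, the $(\max,+)$ product of two bounded-difference matrices, one of which is $t$-bounded, is computed by breaking it into structured sub-products: after normalizing against reference indices the sub-products become $(\max,+)$ products of small bounded-difference matrices, which are handled by the fast bounded-difference matrix multiplication algorithm of Chi, Duan, Xie and Zhang (running in $\Ot(n^{2.687}) = \Ot(n^{2+\mu})$ with $\mu \approx 0.687$), together with a split of the $t$-bounded matrix by its value range and a block parameter tuned to $t$; the partial results are then combined using the $\rangemax$-based computation model. Our matrices carry the similarity-matrix structure and, crucially, are built on a \emph{doubled} Euler tour, so their meaningful entries (the valid cells) form a band $\{(i,j):i\le j\le i+2|E(Q)|\}$ of the upper triangle rather than the full upper triangle as in Mao. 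Hence the task of Section~\ref{sec:mat_mul} is to re-run Mao's decomposition while keeping every normalization, sub-product and stitching step inside this band, and to re-check that the value and bounded-difference guarantees Mao relies on still hold cell by cell.

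The first step is to fix the semantics, exactly as in the proof of Lemma~\ref{mul_bounded}: it suffices to compute $C = A \star B$ up to $=_{valid}$. Any product term $A_{i,k}+B_{k,j}$ that reads a cell below the diagonal or an invalid cell of $A$ or of $B$ is $-\infty$ (or may be treated as such), and any update aimed at an invalid cell of $C$ may be dropped; and since $A$ and $B$ are row-monotone, column-monotone and $2$-bounded-difference on all valid cells, Mao's manipulations are legitimate as long as the indices involved stay within valid cells. Moreover, for the product we actually need, $S(T_1)\star S(T_2) =_{valid} S(T_1+T_2)$, whenever $(i,j)$ is a valid cell of $C$ the optimal middle index $k$ lies in $[i,j]$, so $(i,k)$ and $(k,j)$ are automatically valid and nothing is lost.

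The second, and main, step is to carry out Mao's decomposition over the valid band. One picks a block length $\Delta$, partitions the middle-index range into blocks of length $\Delta$ (and splits $A$ into value-layers of width $\Delta$, of which there are $O(t/\Delta)$), normalizes $A$ and $B$ against reference indices inside each block so that the centered matrices have entry range $O(\Delta)$, pads the resulting rectangular $\Delta\times n$ and $n\times\Delta$ pieces into $\Delta\times\Delta$ blocks, invokes Chi--Duan--Xie--Zhang on each in $\Ot(\Delta^{2+\mu})$ time, and assembles the partial products into $C$. Throughout, one must restrict each sub-product to the portion of the valid band it can legitimately see: a reference index chosen inside a middle-index block must pair with valid cells on both sides for every $(i,j)$ that block contributes to; the value-layer split must not create spurious finite entries in cells that should read $-\infty$; and the blocks that $A$ or $B$ contribute must be truncated to the sub-band. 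Summing the costs and balancing $\Delta$ then yields $\Ot(t^{0.8145}n^2)$ — the same bound as Mao's, with his bounded-difference multiplication subroutine replaced by the newer $\Ot(n^{2.687})$ one.

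The step I expect to be the real obstacle is precisely this band-compatibility. In Mao's setting the matrices are finite-upper-triangular, so once the indices are ordered correctly everything is valid, whereas here an index triple $(i,k,j)$ with $i\le k\le j$ can still fall outside the band, so each sub-product has to be clipped, and one needs a careful case analysis of which $\Delta\times\Delta$ output blocks are fully valid, fully invalid, or straddle the band boundary, so that the correctness argument of Lemma~\ref{mul_bounded} transfers unchanged (no spurious large values appear in valid cells, and every optimal value is still attained). Once this bookkeeping is settled, the rectangular-to-square padding, the choice of $\Delta$ and the cost summation are routine and follow Mao.
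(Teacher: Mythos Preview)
Your plan is workable in spirit but takes a substantially harder route than the paper. You propose to open up Mao's decomposition and re-run it while clipping every sub-product to the valid band, with a case analysis of blocks that straddle the boundary. The paper avoids all of this by a single $\Oh(n^2)$ preprocessing step that \emph{removes} the band constraint, after which Mao's algorithm (Lemma~\ref{Mao-mul}) is invoked as a black box.

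Concretely, the paper fills in every invalid cell $A_{i,j}$ with $\max\{A_{i',j'} : i\le i'\le j'\le j,\ (i',j')\text{ valid}\}$, computable by the simple recurrence $A'_{i,j}=\max(A'_{i+1,j},A'_{i,j-1})$; the same is done to $B$. The resulting matrices $A',B'$ are genuine finite-upper-triangular, row- and column-monotone, $2$-bounded-difference matrices with zero diagonals (the bounded-difference property at the newly filled cells follows by a two-line induction on $i+j$), so they satisfy \emph{all} hypotheses of Mao's lemma with no band to worry about. The only thing left to check is that $A'\star B' =_{valid} A\star B$: if $(i,j)$ is valid and $(i,k)$ is invalid then $k-i>2|E(Q)|\ge j-i$, hence $k>j$ and $B'_{k,j}=-\infty$; symmetrically for $(k,j)$ invalid. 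So the filled-in values never contribute to a valid output cell, and the argument is complete.

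What this buys is exactly the elimination of the ``real obstacle'' you flagged: there is no band-compatibility bookkeeping, no clipping, no case analysis of straddling blocks, and no need to re-verify the internal normalization/padding steps of Mao's procedure. Your approach, if carried through, would reprove Mao's lemma in a restricted setting; the paper instead reduces to it.
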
  

For type II transitions, in Section \ref{sec:type2} we prove the following theorem, which corresponds to Theorem 4.6 from Mao \cite{Mao}. 
\begin{theorem}\label{th:type2}
    Let $T_1, T_2$ be the connected segments of tree $T$ such that $T_1 \subset T_2$ and $|T_2| - |T_1| = \Oh(\Delta)$. Given $S(T_1)$ we can compute $S(T_2)$ in $\Ot(\mul(\Delta, n) + n\Delta^4)$ time.   
\end{theorem}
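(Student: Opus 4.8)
The plan is to adapt Mao's three-part combinatorial method for the ``hat'' transition (Theorem~4.6 in~\cite{Mao}) to our doubled-Euler-tour similarity matrices; the only genuine novelty is the treatment of the spine edges, which under a doubled Euler tour of an edge-labeled tree each contribute two occurrences to the relevant window rather than one. Fix notation for the hat: let $v_0 = \mathrm{root}(T_1), v_1, \ldots, v_k = \mathrm{root}(T_2)$ be the path that $T_2$ adds above $T_1$, and for $1 \le s \le k$ let $A_s$ (resp.\ $B_s$) be the tree obtained by merging all subtrees hanging off $v_s$ to the left (resp.\ right) of the child of $v_s$ on this path, together with the edges from $v_s$ to their roots. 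Since $|T_2| - |T_1| = \Oh(\Delta)$, the path has length $\Oh(\Delta)$ and $\sum_s (|A_s|+|B_s|) = \Oh(\Delta)$, so every similarity matrix of a subtree of the hat is $\Oh(\Delta)$-bounded. A naive implementation would rebuild $T_2$ from $T_1$ by $\Oh(\Delta)$ transitions of type~I (concatenating one $A_s$ or $B_s$) and single-edge type~II (traversing one path edge); but each type~I transition multiplies a descendant of $S(T_1)$ — which is only $\bigl(2|E(T_1)| + \Oh(\Delta)\bigr)$-bounded, hence possibly $\Oh(n)$-bounded — by an $\Oh(\Delta)$-bounded matrix, costing $\Ot(n^2\Delta)$ by Lemma~\ref{mul_bounded}, for a total of $\Ot(n^2\Delta^2)$, which is too slow. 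The whole point is to first fold all the small hat pieces together using only operations that stay in the ``$\Oh(\Delta)$-bounded'' regime, and only at the very end pay a single fast multiplication against $S(T_1)$.

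Concretely, I would proceed in three steps, following Mao. \emph{(1) Left spine:} process the $A_s$ from $s=k$ down to $s=1$, at each step traversing the path edge $v_s v_{s-1}$ (a single-edge type~II step, $\Ot(n)$ time by the computation in case~(b) of Section~\ref{sec:basicdp}) and then concatenating $S(A_s)$ on the left; since all operands are $\Oh(\Delta)$-bounded, each concatenation costs $\Ot(\Delta^2 n)$ by Lemma~\ref{mul_bounded}, and there are $\Oh(\Delta)$ of them, giving $\Ot(\Delta^3 n)$ in total for a matrix $P$ recording the best similarity obtainable from the ``left part'' of the hat. \emph{(2) Right spine:} symmetrically fold the $B_s$ from $s=1$ up to $s=k$ into a matrix $R$, again $\Ot(\Delta^3 n)$. \emph{(3) Core:} combine $P$, $S(T_1)$, and $R$. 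Each of the (at most two) products here that involve $S(T_1)$ has one $\Oh(\Delta)$-bounded operand, so by Theorem~\ref{th:fast_mul} it costs $\mul(\Delta, n)$ — and if that operand lands on the side not covered by the theorem, we apply the theorem after reflecting both trees, which merely reverses the Euler tours and relabels indices. What is left is to charge a spine edge's contribution $\eta(\cdot,\cdot)$ exactly once even though its two Euler occurrences lie on opposite sides of the sub-window that $T_1$ occupies and must match the two occurrences of a single edge of $Q$ that likewise straddle that sub-window; exactly as in Mao, this is handled by a separate enumeration over the innermost matched path edge, over the $\Oh(\Delta)$ candidate path edges, and over the $\Oh(\Delta)$-bounded spine value at each endpoint, run per row/column of the matrix, which contributes the $\Ot(n\Delta^4)$ term. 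Summing, the running time is $\Ot(\mul(\Delta,n) + n\Delta^4)$ as claimed; correctness of the decomposition is the same case analysis as in the cubic dynamic program (cases (b) and (c)), once one checks that every sub-window extracted along the way is again a valid cell of a similarity matrix.

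The step I expect to be the main obstacle is the core step~(3), and within it the spine edges. In Mao's node-labeled, bi-order setting each spine \emph{vertex} is traversed once, whereas in our edge-labeled, doubled-Euler-tour setting each spine \emph{edge} is traversed twice, so the nesting of matched spine edges has to be reconciled with the nesting of the two occurrences of the corresponding edges of $Q$ inside the window. Verifying that the max-plus recurrences of steps~(1)--(3) capture exactly this nested structure — i.e.\ re-deriving the analogues of Mao's hat-decomposition lemmas for our matrices, together with the index bookkeeping that keeps all the arithmetic inside the valid region and shows the enumeration in step~(3) is both necessary and sufficient — is the part requiring genuine care; the complexity accounting above is then routine.
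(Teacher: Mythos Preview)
Your cost accounting ($\mul(\Delta,n)$ for a single fast product against $S(T_1)$, plus $\Ot(n\Delta^4)$ for the combinatorial part) matches the paper, but the concrete steps (1)--(3) do not implement the transition correctly. By folding the spine edges into $P$ (and symmetrically $R$) via case~(b), you make $P$ the similarity matrix of the \emph{left caterpillar} --- the spine together with only the $A_s$ attached --- and then $P \star S(T_1) \star R$ is, by case~(c), the similarity matrix of a concatenation of three siblings, with $T_1$ sitting \emph{beside} the spine(s) rather than nested inside them. That is not $S(T_2)$: whenever a spine edge $e_x$ is matched to an edge $e\in Q[i,j)$, all of $\sub(e_x)$ --- including $T_1$ and every right side-tree below $e_x$ --- must land in the single window $Q[l_{i,j}(e)+1,\,r_{i,j}(e))$, a nested constraint that a three-fold $\star$-product cannot express. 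Your ``separate enumeration over the innermost matched path edge'' does not rescue this as stated, because the set of matched spine edges may have any size up to $k=\Oh(\Delta)$ and each one imposes its own nesting; a flat enumeration over innermost/outermost does not account for the intermediate ones.

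The paper's route is different. It keeps spine edges \emph{out} of the side matrices: using the cubic algorithm it first computes all $\Oh(\Delta^2)$ matrices $S(L_{i,j})$ and $S(R_{i,j})$ for contiguous blocks of side-subtrees (no spine edges), in $\Ot(n\Delta^3)$. The ``no spine edge is matched'' case is then exactly $S(L_{1,k+1}) \star S(T_1) \star S(R_{1,k+1})$, one $\mul(\Delta,n)$ call. For the remaining case it introduces \emph{restricted} similarity matrices $\hat S(\sub(e_x))$, which force $e_x$ to be matched to the outermost edge of the current window, and computes them bottom-up by a DP on the \emph{next} matched spine edge $e_y$ ($y>x$): each $\hat S(\sub(e_x))$ is assembled from either $S(T_1)$ or an already-computed $\hat S(\sub(e_y))$, sandwiched by the appropriate $S(L_{x+1,\cdot})$ and $S(R_{x+1,\cdot})$, via $\Ot(n\Delta^3)$ many $\rangemax$/$\maxrow$/$\mincol$ calls; over all $x$ this is the $\Ot(n\Delta^4)$ term. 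Finally $S(T_2)$ is obtained from the $\hat S(\sub(e_x))$'s by one more enumeration over the \emph{first} matched spine edge. The two ingredients your plan is missing are (i) the full family $\{S(L_{i,j}),S(R_{i,j})\}_{i\le j}$ rather than a single accumulated $P,R$, and (ii) the restricted matrices $\hat S$ together with this bottom-up DP over which spine edges are matched.
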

Note that in the above theorem we have slightly worse time complexity for the type II transition than $\Ot(\mul(\Delta, n) + n\Delta^3)$ time from Mao's algorithm.

\subsection{Implementation of the decomposition scheme}
\label{sec:decomp}
Now we show how to decompose the tree $T$ into $\Oh(n/\Delta)$ transitions of type I and II.
The decomposition procedure is recursive. If $L_T \neq R_T$, $|L_T| \geq \Delta$ and $|R_T| \geq \Delta$ then we use type I transition to compute $S(T)$ from $S(T - R_T)$ and $S(R_T)$. 

Otherwise, we use the type II transition. For that we need to find a connected segment $T'$ of tree $T$, such that $|T| - |T'| = \Oh(\Delta)$. We start with $T'$ equal to $T$ and we keep removing some part of it. If $\deg(\rot(T')) = 1$ then we remove the only edge that is adjacent to the root of $T'$. Otherwise, we remove the smaller tree among $L_{T'}$ and $R_{T'}$. We stop when the next removal would cause that $|T|-|T'| > 2\Delta$. Algorithm \ref{alg:main} contains the entire computation procedure of the similarity matrix $S(T)$. 

\begin{algorithm}[!htb]
    \caption{Computation of S(T)} \label{alg:main}
    \begin{algorithmic}[1]
        \Function{COMPUTE}{$T$}
            \If{$L_T \neq R_T$\  \textbf{and}\  $|L_T| \geq \Delta$ \textbf{and} $|R_T| \geq \Delta$}
                \State COMPUTE($T - R_T$)
                \State COMPUTE($R_T$)
                \State \textbf{Type I} transition: Compute $S(T) = S(T - R_T) \star S(R_T)$ using Theorem \ref{th:fast_mul}
            \Else
                \State $T' \gets T$
                \While {TRUE}
                    \State $T_{NEXT} \gets T'$
                    \If{$\deg(\rot(T')) = 1$}
                        \State  $T_{NEXT} \gets T' - uv$ \Comment{$uv$ is an edge adjacent to the root of $T'$}
                    \ElsIf{$\deg(\rot(T')) > 1$}
                        \State $T_{NEXT} \gets T' - min(L_{T'}, R_{T'})$
                    \EndIf
                    \If{$|T| - |T_{NEXT}| > 2\Delta$ \textbf{or} $T' = \emptyset$}
                        \State \textbf{break}
                    \EndIf 
                    \State $T' \gets T_{NEXT}$
                \EndWhile
                \State COMPUTE($T'$)
                \State \textbf{Type II} transition: Compute $S(T)$ from $S(T')$ using Theorem \ref{th:type2}
            \EndIf
            \State \Return $S(T)$
        \EndFunction
    \end{algorithmic}
\end{algorithm}

Now we prove that Algorithm \ref{alg:main} indeed decomposes the computation of $S(T)$ into $\Oh(n/\Delta)$ transitions of type I and II. For transitions of type I, we can notice that they correspond to merging of two trees of size at least $\Delta$, and therefore there can be at most $\Oh(n/\Delta)$ such transitions.

For transitions of type II, we can divide them into three groups:
\begin{itemize}
    \item Large transitions: $T' \neq \emptyset$ and $|T| - |T'| \geq \Delta$
    \item Small transitions: $T' \neq \emptyset$ and $|T| - |T'| < \Delta$
    \item Base transitions: $T' = \emptyset$
\end{itemize}
First, we can notice that trees $T - T'$ are disjoint between different transitions of type II. We can therefore bound the number of large transitions of type II by $\Oh(n/\Delta)$. 

Now, let us notice that after each small transition there will be a transition of type I. Furthermore, no two small transitions are followed by the same type I transition. Thus, we can bound the number of small transitions by the number of transitions of type I, which we have already shown is $\Oh(n/\Delta)$.

Finally, for base transitions we can notice that they correspond to the leaves in the recursion tree. The number of leaves in this recursion tree is exactly greater by $1$ than the number of binary branches, i.e.~type I transitions. Thus, we showed that the total number of transitions of type I and II in our decomposition scheme is bounded by $\Oh(n/\Delta)$.

\subsection{Type I transitions}
\label{sec:mat_mul}
In this section, we show how to efficiently multiply similarity matrices of unrooted trees and prove Theorem \ref{th:fast_mul}.
Mao showed how to multiply similarity matrices of rooted trees in $\Ot(t^{0.9038}n^2)$ time, where 
$t$ is a bound on the entries of one of the matrices. For unrooted trees, we will use his algorithm with a slight modification.

Mao's algorithm is a recursive procedure that uses the $(\max, +)$ matrix multiplication of bounded-difference matrices as a sub-procedure. 
Lately, Chi et al. \cite{Matrix} showed a better algorithm for the $(\max, +)$ product of bounded-difference matrices. 
This algorithm allows us to get a better exponent in Mao's matrix multiplication. 

\begin{theorem}[{\cite{Matrix}}]\label{th:bound_mat_mul}
    There is an $\Ot(n^{(3 + \omega)/2}) \leq \Ot(n^{2.687})$ time randomized algorithm that computes the 
    $(\min, +)$ product of any two $n \times n$ bounded-difference matrices.
\end{theorem}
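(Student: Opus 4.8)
\noindent\emph{Proof proposal.}
This theorem is quoted verbatim from Chi, Duan, Xie and Zhang~\cite{Matrix} and we use it as a black box, so strictly speaking there is nothing to prove here; but since it is the one external ingredient that drives our running time, let me outline the route one would take to establish it (it builds on the earlier truly-subcubic algorithm for this problem due to Bringmann, Grandoni, Saha and Vassilevska~Williams). The first step I would take is the standard reduction of a \emph{bounded-range} min-plus product to fast matrix multiplication: if an $a\times b$ matrix $A'$ and a $b\times c$ matrix $B'$ have integer entries in $[0,R]$, then replacing $A'_{ik}$ by the monomial $x^{A'_{ik}}$ and $B'_{kj}$ by $x^{B'_{kj}}$ and multiplying the resulting polynomial matrices recovers $\min_k(A'_{ik}+B'_{kj})$ as the lowest nonzero degree of each output entry, at the cost of $\Ot(R)$ rectangular matrix multiplications of shape $a\times b\times c$. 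The whole problem is thus to cut a bounded-difference instance into few bounded-range ones.

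The second step exploits bounded difference directly: partition the summation index $[n]$ into blocks $K_1,\dots,K_{n/\Delta}$ of length $\Delta$, pick a representative $k_t\in K_t$, and observe that since adjacent entries differ by $\Oh(1)$ the shifted matrices $A_{ik}-A_{i,k_t}$ and $B_{kj}-B_{k_t,j}$ restricted to $k\in K_t$ have entries in a range of width $\Oh(\Delta)$, while $\min_{k\in K_t}(A_{ik}+B_{kj})=A_{i,k_t}+B_{k_t,j}+\min_{k\in K_t}\bigl((A_{ik}-A_{i,k_t})+(B_{kj}-B_{k_t,j})\bigr)$; each block is then a bounded-range product of inner dimension $\Delta$ solvable by the polynomial trick, and $C$ is the minimum of the $n/\Delta$ block products. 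Doing all $n/\Delta$ blocks this way is not subcubic, and the real content of~\cite{Matrix} is to avoid that blow-up: (i) by random sampling of $\Ot(n/\delta)$ summation indices one cheaply computes over-estimates $\widetilde C_{ij}$ with $C_{ij}\le\widetilde C_{ij}\le C_{ij}+\Oh(\delta)$, which is possible because bounded difference guarantees an interval of $\Omega(\delta)$ consecutive near-optimal indices around every true minimizer; (ii) grouping the rows and columns of the output into blocks of size $\rho$, all true values in one output block lie within $\Oh(\delta+\rho)$ of the block's estimate, so only summation-index blocks whose pivot value falls in a window of width $\Oh(\delta+\rho+\Delta)$ can contain a minimizer for that output block; (iii) one bounds the total number of such relevant index-blocks over all output blocks (again using bounded difference, now of the pivot vector), recurses on the residual instance, and balances $\Delta,\rho,\delta$ and the recursion depth, which is what produces the exponent $(3+\omega)/2$.

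The step I expect to be the real obstacle is (iii): controlling the total amount of ``relevant'' work left after the approximation phase, and arranging the recursion so the parameters balance to precisely $n^{(3+\omega)/2}$ rather than to something weaker. The reduction to bounded-range products and the polynomial encoding are routine, and the sampling guarantee is a short probabilistic argument; the combinatorial accounting of relevant indices and the recursive balancing carry all the difficulty, and for those we simply invoke~\cite{Matrix}.
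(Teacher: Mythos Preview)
Your proposal is correct and matches the paper's treatment: the paper states this theorem with a citation to~\cite{Matrix} and uses it purely as a black box, giving no proof or even sketch. Your additional outline of the Chi--Duan--Xie--Zhang argument goes beyond what the paper does, but is accurate as a high-level description and does no harm; strictly speaking the first sentence of your proposal already suffices.
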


To analyze the improvement in the exponent, we will use the following lemma implicitly proven by Mao:

\begin{lemma}[{\cite[Section 4.4]{Mao}}]
\label{Mao-mul}
    Assume there is an $\Ot(n^c)$ time algorithm that computes the $(\min, +)$ product of any two 
    $n \times n$ bounded-difference matrices. Let $A$, $B$ be row-monotone, column-monotone and 
    finite-upper-triangular-bounded-difference
    $n \times n$ matrices whose entries on the main diagonals are zero.     
    If $A$ is $t$-bounded-upper-triangular and $\delta$ is a positive value, then $C = A \star B$ 
    can be computed in $\Ot(n^2t^2/\delta + n^2 \delta^{c-2})$ time.
\end{lemma}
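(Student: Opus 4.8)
The plan is to follow Mao's argument from Section~4.4, which rests on two ingredients — a level-set reformulation of the $(\max,+)$ product and a block decomposition — whose costs are then balanced against each other.

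First I would use that $A$ is $t$-bounded, row-monotone, column-monotone and has zeros on its main diagonal: for every row $i$ and every value $v \in \{0,1,\dots,t\}$ the set $\{k \ge i : A_{ik} \ge v\}$ is a suffix $[\kappa_v(i), n]$ of $[i,n]$, with $\kappa_0(i) = i$, and all the thresholds $\kappa_v(i) = \mincol(A, i, v)$ can be read off using the stored-matrix queries. Because $B$ is column-monotone, inside each level set the best choice of $k$ is its left endpoint $\kappa_v(i)$, which gives the identity
\[
  C_{ij} = \max\bigl\{\, v + B_{\kappa_v(i),\,j} \;:\; 0 \le v \le t,\ \kappa_v(i) \le j \,\bigr\}.
\]
Evaluated cell by cell this is an $\Ot(n^2 t)$ algorithm. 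To beat it, I would keep only the thresholds for $v$ a multiple of $\delta$; a one-line monotonicity check ($\kappa_{v'}(i) \le \kappa_v(i)$ and $B_{\kappa_{v'}(i),j} \ge B_{\kappa_v(i),j}$ whenever $v' \le v$) shows the resulting matrix $\hat C$ satisfies $\hat C_{ij} \le C_{ij} \le \hat C_{ij} + \delta$. Computing $\hat C$ from the coarsened identity, together with the bookkeeping needed to localise every cell of $C$ to an interval of inner indices of width $O(\delta)$ and to resolve it there by the small-entry (Lemma~\ref{mul_bounded}-style) multiplication on $\Ot(n/\delta)$ sub-instances, is what produces the $\Ot(n^2 t^2/\delta)$ term.

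The last $O(\delta)$ of precision I would recover with the assumed bounded-difference multiplication. Here I would cut the index range into blocks of width $\delta$; on each relevant pair of blocks I would subtract suitable row and column offsets (derived from $\hat C$) so that the relevant parts of $A$ and $B$ become genuine $O(1)$-bounded-difference $\delta \times \delta$ matrices, and then compute their $(\max,+)$ product with the $\Ot(n^c)$ algorithm (negating to turn $(\max,+)$ into $(\min,+)$). The upper-triangular shape of $A$ and $B$ together with the $t$-boundedness of $A$ makes only $\Ot((n/\delta)^2)$ block pairs relevant, so this costs $\Ot((n/\delta)^2 \delta^c) = \Ot(n^2 \delta^{c-2})$; the $\Ot(n/\delta)$ blocks that straddle the diagonal, where $A$ and $B$ still contain $-\infty$ entries, are handled directly and are negligible. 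Adding the two contributions gives the stated bound. I expect the refinement step to be the real obstacle: one has to check that subtracting the coarse estimate leaves residual $\delta \times \delta$ sub-matrices that are still $O(1)$-bounded-difference (so the oracle applies) and that the triangular structure really limits the number of block products to $\Ot((n/\delta)^2)$ and not $\Ot((n/\delta)^3)$. For our purposes the argument is identical to Mao's, the only difference being that our matrices are similarity matrices of unrooted trees, whose bounded-difference and upper-triangular properties can fail only on invalid cells — and those are never read or written by any of the operations above.
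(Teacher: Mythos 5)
The paper does not reprove this lemma — it cites Mao's Section~4.4 wholesale, so your sketch is being compared against Mao's argument rather than against anything in this paper. At the level of high-level ingredients you are on the right track: a coarse approximation via $\delta$-spaced level sets of $A$ (giving $\hat C$ with $\hat C_{ij}\le C_{ij}\le \hat C_{ij}+\delta$), followed by a refinement that invokes the assumed $\Ot(n^c)$ bounded-difference $(\min,+)$ oracle on shifted $\delta\times\delta$ sub-blocks. Those are indeed the two ingredients Mao combines.

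However, there are two real gaps in the accounting, and you yourself flag the second one. First, the level-set computation of $\hat C$ as you describe it — for each of the $n^2$ cells iterate over the $\Oh(t/\delta)$ sampled thresholds — costs $\Ot(n^2 t/\delta)$, not $\Ot(n^2 t^2/\delta)$. You attribute the missing factor of $t$ to ``bookkeeping'' and to running the Lemma~\ref{mul_bounded}-style small-entry multiplication ``on $\Ot(n/\delta)$ sub-instances,'' but you never specify what these sub-instances are, what bounds their entries have, or why the product of those quantities comes out to $n^2t^2/\delta$. As stated this is a placeholder rather than an argument, and since the two terms of the target bound are balanced against each other the exponent of $t$ matters.

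Second, and more seriously, the claim that ``the upper-triangular shape of $A$ and $B$ together with the $t$-boundedness of $A$ makes only $\Ot((n/\delta)^2)$ block pairs relevant'' is unsubstantiated, and I do not believe it follows from those two properties alone. For an upper-triangular $(\max,+)$ product the natural block count is $\Theta((n/\delta)^3)$: for each output block $(I,J)$ one must still range over intermediate blocks $K$ with $I\le K\le J$. The $t$-boundedness of $A$ bounds the values of $A$, not the number of distinct $K$-blocks that can carry the maximiser — a row of $A$ with at most $t$ unit increments can still spread those increments over $\Theta(n)$ columns, and the optimal $k^*(i,j)$ is not confined to an $\Oh(\delta)$-window around $i$ (indeed if $B_{kj}$ is constant on $[i,j]$ then $k^*=j$ is optimal). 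What is actually needed is an argument that uses the coarse solution $\hat C$ (and the witnesses it records) to restrict, for each output block, which inner blocks can still matter; that is the substantive content of Mao's Section~4.4 and your sketch does not reproduce it. You were right to single this step out as ``the real obstacle'' — it is precisely the part of the proof that is missing here.

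Your final remark — that the difference between Mao's setting and ours is that the similarity matrices of unrooted trees can violate monotonicity and bounded-difference only on invalid cells, which are ``never read or written'' — is not quite how the paper handles it either. Before invoking this lemma, Section~\ref{sec:mat_mul} explicitly overwrites the invalid cells (via the running-maximum transform $A'_{i,j}=\max(A'_{i-1,j},A'_{i,j-1})$), proves that the resulting $A'$, $B'$ are genuinely row/column-monotone, finite-upper-triangular, and bounded-difference everywhere, and then separately argues $C'=_{valid}C$. So the reduction to Mao's setting is a concrete preprocessing step, not a matter of the algorithm never touching those cells.
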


The above complexity reaches the optimum when $n^2t^2/\delta = n^2 \delta^{c-2}$. 
By setting $\delta = t^{2/(c-1)}$ we get that we can multiply the similarity matrices of rooted trees in $\Ot(t^{(2c-4)/(c-1)}n^2)$ time, 
if one of the matrices has values bounded by $t$.

Now we want to use this algorithm also for unrooted trees. However, we need to make sure 
that the values from invalid cells do not affect the values of valid cells. To easily handle this,  
before calling the algorithm from Lemma \ref{Mao-mul} we fix the values of invalid cells based on the values of valid cells.

Let $A$, $B$ be similarity matrices of unrooted trees. We want to compute $C = A \star B$.
For every invalid cell $A_{i, j}$ we set new value of $A_{i, j}$ as the maximum of valid cells $A_{i',j'}$, 
such that $i \leq i' \leq j' \leq j$. Let $A'$ be the resulting matrix. 
We can easily compute this transformation in $\Oh(n^2)$ time by using simple recursive equation 
$A'_{i,j} = \max(A'_{i+1, j}, A'_{i, j-1})$. Similarly, we convert matrix $B$ into matrix $B'$. 

The obtained matrices $A'$ and $B'$ are obviously monotone and finite-upper-triangular. 
To prove that these new matrices are bounded-difference let us use induction on $i + j$.
For a valid cell, our inductive thesis is satisfied. Now, take any invalid cell $A'_{i, j}$. 
Then, by inductive hypothesis we have that $A'_{i-1, j} \in [A'_{i-1, j-1}, A'_{i-1, j-1} + 2]$ 
and $A'_{i, j-1} \in [A'_{i-1, j-1}, A'_{i-1, j-1} + 2]$.
Thus, $|A'_{i,j} - A'_{i-1, j}| \leq 2$ and $|A'_{i,j} - A'_{i, j-1}| \leq 2$, so 
matrices $A'$ and $B'$ are bounded-difference matrices.

Let $C' = A' \star B'$. It remains to show that $C' =_{valid} C$. Let us take any valid cell $C'_{i, j}$.
There is a position $k$ such that $C'_{i, j} = A'_{i, k} + B'_{k, j}$. 
If one of $A'_{i, k}$, $B'_{k, j}$ is an invalid cell, then the other would be equal to $-\infty$
and we would have $C'_{i, j} < 0$. But $C'_{i, j} \geq A'_{i, i} + B'_{i, j} \geq 0$, 
which gives us a contradiction.
Thus, invalid cells of $A'$ and $B'$ have no effect on valid cells of $C'$, so $C' =_{valid} C$.    

This gives us that we can compute $C = A \star B$ in $\Ot(t^{(2c-4)/(c-1)}n^2)$ time 
assuming we can compute the $(\min, +)$  product of any two 
$n \times n$ bounded-difference matrices in $\Ot(n^c)$ time. 

Combining this with Theorem~\ref{th:bound_mat_mul} we get the proof of Theorem \ref{th:fast_mul}:

\subsection{Type II transitions}
\label{sec:type2}

In this section, we prove Theorem \ref{th:type2}. Let $T_1$, $T_2$ be connected segments of tree $T$ such that $T_1 \subset T_2$ and $|T_2| - |T_1| = \Oh(\Delta)$. We want to compute $S(T_2)$ given $S(T_1)$. 

Let us consider the path $p = e_1e_2\ldots e_k$ in tree $T$ that goes from the root of $T_2$ to the root of $T_1$. For $1\leq i \leq k$, let $L_i$ be the subtree of tree $T_2$ consisting of siblings of edge $e_i$ that are to the left of $e_i$ and all descendants of these edges. Additionally, let $L_{k+1}$ be a subtree of tree $T_2$ consisting of descendants of edge $e_k$ that are to the left of tree $T_1$. Similarly, we denote subtrees that are to the right of path $p$ by $R_i$ for $1\leq i \leq k+1$. See Figure \ref{fig:czap}.

\begin{figure}[!h]
    \begin{center}
    \includegraphics[scale=1.15]{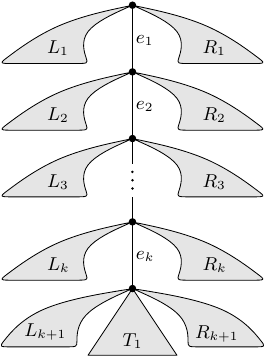}
    \caption{Tree $T_2$. In the type II transition we compute $S(T_2)$ based on $S(T_1)$.}\label{fig:czap}
    \end{center}
\end{figure}

For $1\leq i \leq j \leq k+1$, by $L_{i, j}$ we denote the tree $L_i + L_{i+1} + \ldots + L_{j}$ and by $R_{i, j}$ we denote the tree $R_{j} + R_{j-1} + \ldots + R_{i}$.

First, for all trees $L_{i,j}$, $R_{i, j}$, we compute the similarity matrices $S(L_{i,j})$, $S(R_{i, j})$ using the $\Ot(nm^2)$ time algorithm from Section \ref{sec:cubic}. We notice that in one run of this algorithm, we can compute $S(L_{i,j})$ for a fixed $i$ and all $j \geq i$. Thus, we only need to use that algorithm $\Oh(\Delta)$ times, which in total gives us $\Ot(n \Delta ^3)$ time for that step.

Now, we consider two cases:
\begin{enumerate}[(a)]
    \item None of the edges $e_1,\ldots,e_k$ is matched to some edge of tree $Q$.
    \item At least one of the edges $e_1,\ldots,e_k$ is matched to some edge of tree $Q$.
\end{enumerate}

For case (a), we can use Theorem \ref{th:fast_mul} to compute $S(L_{1, k+1}) \star S(T_1) \star S(R_{1, k+1})$ in $\mul(\Delta, n)$ time. For case (b), we define a restricted version of the similarity matrix for trees $T'$, such that the root of $T'$ has only one child.

\begin{definition}[Restricted similarity matrix]
    Let $T'$ be a rooted tree such that the root $u$ of $T'$ has only one child $v$.
    The \textit{restricted similarity matrix}  $\hat{S}(T')$ is a $4|E(Q)| \times 4|E(Q)|$ matrix where:
    \begin{gather*}
        \hat{S}(T')_{i,j} =
            \begin{cases}
                \max_{e \in Q[i, j)}\{ \simi(T' - uv, \sub(e) - e) + \eta(uv, e) \} & \text{ if } i \leq j, j - i \leq 2|E(Q)| \\
                -\infty & \text{ otherwise }
            \end{cases}
    \end{gather*}
\end{definition}

We will compute $\hat{S}(\sub(e_i))$ for all edges $e_i$ on path $p$. But first, let us assume that we have already computed these restricted similarity matrices. We show how to compute $S(T_2)$ using these matrices.

As mentioned before we can first initialize $S(T_2)$ with $S(L_{1, k+1}) \star S(T_1) \star S(R_{1, k+1})$. Now, let us consider a single valid cell $S(T_2)_{i, j}$ that we want to compute. To cover case (b), we can iterate through:
\begin{itemize}
    \item First edge $e_x$ from path $p$ that is matched.
    \item Edge $e \in Q[i, j)$ such that $e_x$ is matched to $e$.
\end{itemize}
Then, as shown in  Figure \ref{fig:top}, we have:
\begin{itemize}
    \item Edges from $L_{1, x}$ are matched to edges from $Q[i, l_{i,j}(e))$ contributing $S(L_{1,x})_{i, l_{i,j}(e)}$.
    \item Edges from $\sub(e_x)$ are matched to edges from $Q[l_{i,j}(e), r_{i,j}(e)+1)$ contributing\\ $\hat{S}(\sub(e_x))_{l_{i,j}(e), r_{i,j}(e)+1}$.
    \item Edges from $R_{1, x}$ are matched to edges from $Q[r_{i,j}(e)+1, j)$ contributing $S(R_{1,x})_{r_{i,j}(e)+1, j}$.
\end{itemize}
\begin{figure}[!h]
    \begin{center}
    \includegraphics[scale=1.15]{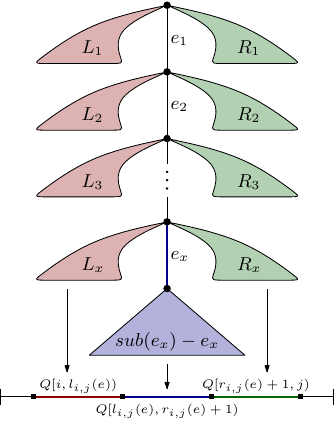}
    \caption{Computing $S(T_2)$ from $\hat{S}(\sub(e_x))$.}\label{fig:top}
    \end{center}
\end{figure}
Note that we allow edge $e_x$ to be matched to some other edge than $e$ from $Q[l_{i,j}(e), r_{i,j}(e)+1)$. However, the cost of such matching will be at least as good as the cost of best matching that matches $e_x$ to $e$, thus it does not affect the correctness of our algorithm.

This gives us that we can compute $S(T_2)$ from all $\hat{S}(\sub(e_i))$ in $\Ot(n^3 \Delta)$. To optimize this, we can use a similar idea to the one from Algorithm \ref{dp_b_opt}. Instead of calculating all the cells from $S(T_2)$ separately, we can do it globally. We iterate through:
\begin{itemize}
    \item First edge $e_x$ from path $p$ that is matched.
    \item Edge $e \in Q$ such that $e_x$ is matched to $e$.
    \item Occurrence $(I(e)_c, I(e)_{c+1})$ of the edge $e$ in the doubled Euler tour.
    \item Cost $a$ of matching edges from $L_{1, x}$.
    \item Cost $b$ of matching edges from $R_{1, x}$.
\end{itemize}
Now we can find the shortest segment $[i, j)$, such that it contains $(I(e)_c, I(e)_{c+1})$, the cost of matching $L_{1,x}$ to $Q[i, I(e)_c)$ is at least $a$ and the cost of matching $R_{1, x}$ to $Q[I(e)_{c+1} + 1, j)$ is at least $b$. Then we can use a single $\rangemax$ operation to update all valid cells $S(T_2)_{i',j'}$ for which $i' \leq i$ and $j \leq j'$. See Algorithm \ref{alg:top} for details. 

This gives us that we can compute $S(T_2)$ from $\hat{S}(\sub(e_i))$'s in $\Ot(n \Delta^3)$ time.
\begin{algorithm}[!htb]
    \caption{Computation of $S(T_2)$ from $S(T_1)$ and $\hat{S}(\sub(e_i))$'s} \label{alg:top}
    \begin{algorithmic}[1]
        \State $S(T_2) \gets S(L_{1, k+1}) \star S(T_1) \star S(R_{1, k+1})$
        \ForAll{$x \in [1, k]$}
            \ForAll{$e \in E(Q)$ \textbf{and} $c \in [1, 3]$} 
                \ForAll{$a \in [0, 2(|T_2| - |T_1|)]$}
                \ForAll{$b \in [0, 2(|T_2| - |T_1|)]$}
                    \State $i \gets \maxrow(S(L_{1, x}), I(e)_c, a)$
                    \State $j \gets \mincol(S(R_{1, x}), I(e)_{c+1} + 1, b)$
                    \State $val \gets S(L_{1,x})_{i, I(e)_c} + \hat{S}(\sub(e_x))_{I(e)_c, I(e)_{c+1} + 1} + S(R_{1,x})_{I(e)_{c+1} + 1, j}$
                    \State $S(T_2) \gets \rangemax(S(T_2), i, j, val)$
                \EndFor
                \EndFor
            \EndFor
        \EndFor
        \State \Return $S(T_2)$
    \end{algorithmic}
\end{algorithm}

Now, it remains to show how to compute all $\hat{S}(\sub(e_i))$'s. We compute them starting with $e_k$ and going up to $e_1$. Thus, assume that we have already computed $\hat{S}(\sub(e_i))$'s for all $i$ greater than some $x$. We want to compute $\hat{S}(\sub(e_x))$. Let us consider two cases:
\begin{enumerate}[(a)]
    \item None of the edges $e_{x+1}, \ldots, e_k$ is matched.
    \item At least one of the edges $e_{x+1}, \ldots, e_k$ is matched.
\end{enumerate}

\subparagraph*{Case (a):} Let us consider the valid cell $S(T_2)_{i, j}$ that we want to compute. To do this we iterate through:
\begin{itemize}
    \item Edge $e \in Q[i, j)$ such that $e_x$ is matched to $e$.
    \item Values $i', j'$ such that $l_{i,j}(e) \leq i' \leq j' \leq r_{i, j}(e)$ and edges from $T_1$ are matched to edges from $Q[i', j')$.
\end{itemize}
Then, as shown in Figure \ref{fig:bottom}, we have:
\begin{itemize}
    \item Edge $e_x$ is matched to edge $e$ contributing $\eta(e_x, e)$.
    \item Edges from $L_{x+1, k+1}$ are matched to edges from $Q[l_{i,j}(e), i')$ contributing \\$S(L_{x+1, k+1})_{l_{i,j}(e), i'}$.
    \item Edges from $T_1$ are matched to edges from $Q[i', j')$ contributing $S(T_1)_{i',j'}$.
    \item Edges from $R_{x+1, k+1}$ are matched to edges from $Q[j', r_{i,j}(e))$ contributing \\$S(R_{x+1,k+1})_{j', r_{i,j}(e)}$.
\end{itemize}
It is easy to see that this contribution applies not only to $S(T_2)_{i, j}$, but also to all valid cells $S(T_2)_{i'', j''}$ such that $i'' \leq l_{i,j}(e)$ and $r_{i, j}(e) + 1 \leq j''$. Thus, we do not need to compute cells $S(T_2)_{i, j}$ separately. This gives us $\Ot(n^3)$ time algorithm for case (a). We can optimize it to $\Ot(n \Delta^2)$ time in a similar way as we optimized Algorithm \ref{alg:top}.
See Algorithm \ref{alg:restricted} for full details.

\subparagraph*{Case (b):} Let $S(T_2)_{i, j}$ be some valid cell we want to compute. To do this, we iterate through:
\begin{itemize}
    \item Edge $e \in Q[i, j)$ such that $e_x$ is matched to $e$.
    \item First edge $e_y$ from $e_{x+1},\ldots,e_{k+1}$ that is matched.
    \item Values $i', j'$ such that $l_{i,j}(e) \leq i' \leq j' \leq r_{i, j}(e)$ and edges from $\sub(e_y)$ are matched to edges from $Q[i', j')$.
\end{itemize}
Then, as shown in Figure \ref{fig:middle}, we have:
\begin{itemize}
    \item Edge $e_x$ is mapped to edge $e$ contributing $\eta(e_x, e)$.
    \item Edges from $L_{x+1, y}$ are matched to edges from $Q[l_{i,j}(e), i')$ contributing $S(L_{x+1, y})_{l_{i,j}(e), i'}$.
    \item Edges from $\sub(e_y)$ are matched to edges from $Q[i', j')$ contributing $\hat{S}(\sub(e_y))_{i',j'}$.
    \item Edges from $R_{x+1, y}$ are matched to edges from $Q[j', r_{i,j}(e))$ contributing $S(R_{x+1,y})_{j', r_{i,j}(e)}$.
\end{itemize}
Similar to the case (a), we can optimize this case to $\Ot(n \Delta^3)$ time. See Algorithm \ref{alg:restricted} for details.

\begin{figure}[!htb]
    \centering
    \begin{subfigure}{0.45\textwidth}
        \centering
        \includegraphics[scale=1.1]{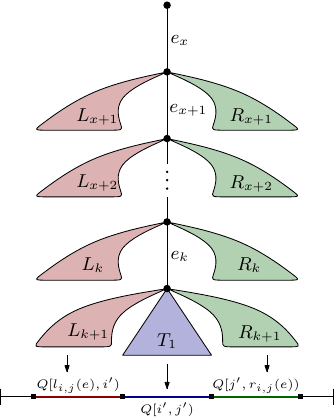}
        \caption{Computing $\hat{S}(\sub(e_x))$ from $S(T_1)$.}\label{fig:bottom}
    \end{subfigure}
    \hfill
    \begin{subfigure}{0.45\textwidth}
        \centering
        \includegraphics[scale=1.1]{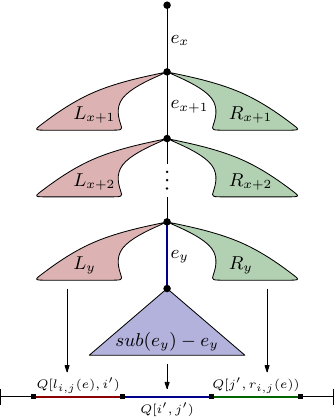}
        \caption{Computing $\hat{S}(\sub(e_x))$ from $\hat{S}(\sub(e_y))$.}\label{fig:middle}
    \end{subfigure}
    \caption{Computing $\hat{S}(\sub(e_x))$}
\end{figure}



\begin{algorithm}[!htb]
    \caption{Computation of $\hat{S}(\sub(e_x))$ from $\hat{S}(\sub(e_i))$ for $i > y$} \label{alg:restricted}
    \begin{algorithmic}[1]
        \State $\hat{S}(\sub(e_x)) \gets [-\infty]_{4|E(Q)|, 4|E(Q)|}$
        \ForAll{$e \in E(Q)$ \textbf{and} $c \in [1, 3]$} \Comment{Case (a)}
            \ForAll{$a \in [0, 2(|T_2| - |T_1|)]$}
            \ForAll{$b \in [0, 2(|T_2| - |T_1|)]$}
                \State $i \gets \mincol(S(L_{x+1, k+1}), I(e)_c, a)$
                \State $j \gets \maxrow(S(R_{x+1, k+1}), I(e)_{c+1}, b)$
                \State $val \gets \eta(e_x, e) + S(L_{x+1, k+1})_{I(e)_c, i} + S(T_1)_{i,j} + S(R_{x+1,k+1})_{j, I(e)_{c+1}}$
                \State $\hat{S}(\sub(e_x)) \gets \rangemax(S(T_2), I(e)_c, I(e)_{c+1} + 1, val)$
            \EndFor
            \EndFor
        \EndFor
        \ForAll{$y \in [x+1, k]$} \Comment{Case (b)}
            \ForAll{$e \in E(Q)$ \textbf{and} $c \in [1, 3]$} 
                \ForAll{$a \in [0, 2(|T_2| - |T_1|)]$}
                \ForAll{$b \in [0, 2(|T_2| - |T_1|)]$}
                    \State $i \gets \mincol(S(L_{x+1, y}), I(e)_c, a)$
                    \State $j \gets \maxrow(S(R_{x+1, y}), I(e)_{c+1}, b)$
                    \State $val \gets \eta(e_x, e) + S(L_{x+1, y})_{I(e)_c, i} + \hat{S}(\sub(e_y))_{i,j} + S(R_{x+1,y})_{j, I(e)_{c+1}}$
                    \State $\hat{S}(\sub(e_x)) \gets \rangemax(S(T_2), I(e)_c, I(e)_{c+1} + 1, val)$
                \EndFor
                \EndFor
            \EndFor
        \EndFor
        \State \Return $\hat{S}(\sub(e_x))$
    \end{algorithmic}
\end{algorithm}

Thus, we can compute single $\hat{S}(\sub(e_x))$ in $\Ot(n \Delta^3)$ time and all of them in $\Ot(n \Delta^4)$ time. Therefore, a whole single transition of type II can be computed in $\Ot(\mul(\Delta, n) + n \Delta^4)$ time, which finishes the proof of Theorem \ref{th:type2}.

\subsection{Total running time}
\label{sec:time}
Now we analyze the total running time of our algorithm. Unlike Mao, we also present the total running time in terms of matrix multiplication complexity. 

We start with an analysis in terms of matrix multiplication complexity for similarity matrices. Let us assume that we can multiply similarity matrices in $\Oh(t^a n^b)$ time when one of the matrices is $t$-bounded and $a \in [0, 1]$.

For type I transitions, we first assume that we use a slower $\Oh(tn^b)$ time algorithm for matrix multiplication. Now, we use the small-to-large technique. 
Each multiplication of the similarity matrices of two trees $T_1$, $T_2$ takes $\Oh(\min(|T_1|,|T_2|)n^b)$ time. We can therefore look at it as if each node from the smaller tree adds $\Oh(n^b)$ time to our complexity. Through our algorithm any node can be at most $\Oh(\log n)$ times in a smaller tree, so all type I transitions take $\Ot(n^{1+b})$ time when we use an $\Oh(t n^b)$ time algorithm for matrix multiplication.

Now, let us replace the $\Oh(t n^b)$ time algorithm with the $\Oh(t^a n^b)$ time algorithm. 
As both trees in type I transitions have size at least $\Delta$, we save (multiplicatively) $\Omega(\Delta^{1-a})$ time on each multiplication. Thus, the total running time of type I transitions is $\Ot(n^{1+b} / \Delta ^{1-a})$ time.

For type II transitions, we can compute their total running time directly:
\begin{gather*}
    \Ot((n/\Delta)(\Delta^a n^b + n\Delta^4)) = \Ot(n^{1+b}/\Delta^{1-a} + n^2\Delta^3)
\end{gather*}

Thus, the total running time of our algorithm is $\Ot(n^{1+b}/\Delta^{1-a} + n^2\Delta^3)$. It reaches the optimum when $\Delta = n^{(b-1)/(4-a)}$ which gives us that the whole algorithm works in $\Ot(n^{(3b-2a+5)/(4-a)})$ time.

Now assume that we can compute the $(\min, +)$ product of two bounded-difference matrices in $\Ot(n^c)$ time. From the proof of Theorem \ref{th:fast_mul} we know that we can multiply similarity matrices in $\Oh(t^{(2c-4)/(c-1)} n^2)$ time if one of the matrices is $t$-bounded. Thus, we can set $a := (2c-4)/(c-1)$ and $b := 2$. This gives us that the total running time of our algorithm is $\Ot(n^{(7c-3)/2c})$ time.

From Theorem \ref{th:bound_mat_mul} we know that $c \leq (3 + \omega)/2$ where $\omega$ is the fast matrix multiplication exponent. Therefore, we get that our algorithm works in $\Ot(n^{(7\omega + 15)/(2\omega + 6)})$ time.

Finally, as $\omega < 2.37286$ \cite{omega} we get that we can compute the tree edit distance between two unrooted trees of size $n$ in $\Ot(n^{2,9417})$ time.

\subparagraph*{Trees of different sizes:}
Now we address the case when input trees have different sizes. Let $m = |T|$ and $n = |Q|$. 
In such a case, our decomposition scheme decomposes the computation of $S(T)$ into $\Oh(m/\Delta)$ transitions. For type I transitions, we can see that the complexity changes to $\Oh(mn^b /\Delta^{1-a} )$ time. For type II transitions we have:
\begin{gather*}
    \Ot((m/\Delta)(\Delta^a n^b + n\Delta^4)) = \Ot(mn^{b}/\Delta^{1-a} + mn\Delta^3)
\end{gather*}
Now it is easy to see that in all the above complexities, one factor $\Oh(n)$ changes to $\Oh(m)$. Thus, we get the proof of Theorem \ref{th:main}.

\section{Final remarks}
We have presented the first truly subcubic algorithm for the unweighted variant of the unrooted tree edit distance. However, as mentioned before our algorithm has a slightly worse exponent than the best-known algorithm for rooted trees. The difference comes from Theorem \ref{th:type2} where we have time complexity $\Ot(\mul(\Delta, n) + n\Delta^4)$, while Mao has $\Ot(\mul(\Delta, n) + n\Delta^3)$ in corresponding theorem. It will be interesting to see if our algorithm could be optimized to match the time complexity of Mao's algorithm.

For the weighted tree edit distance probably the most interesting open problem is a question whether there exists a weakly subcubic algorithm for that version. One of the lower bounds \cite{bound} that claims this problem cannot be solved in a truly subcubic time is based on APSP conjecture. However, for APSP weakly subcubic time algorithms are already known. Williams \cite{apsp} showed that the APSP problem can be solved in $\Oh(n^3 / 2^{\Omega(\sqrt{\log n})})$ time. Thus, we can try to obtain a weakly subcubic algorithm for the weighted tree edit distance by showing a reduction to the APSP problem.

\bibliography{main}

\end{document}